\newtheorem{definition}{Definition}[section]
\newtheorem{remark}{Remark}[section]
\newtheorem{lemma}{Lemma}[section]
\newtheorem{corollary}{Corollary}[section]
\newtheorem{proposition}{Proposition}[section]
\newtheorem{theorem}{Theorem}[section]
\newenvironment{proof}{\noindent\textbf{Proof. }}{\hfill \small $\Box$}
\journal{xxxxxxxxxx}
\begin{document}
	
	\begin{frontmatter}
		
		
		
		\title{Residuated implications derived from quasi-overlap functions on lattices}
		
		\author[label1]{Rui Paiva}
		\author[label2]{Benjam\'in Bedregal}
		\author[label2]{Regivan Santiago}
		\address[label1]{Instituto Federal de Educa\c{c}\~ao, Ci\^{e}ncia e Tecnologia do Cear\'a\\
			Maracana\'u, Brazil\\
			Email: rui.brasileiro@ifce.edu.br}

		\address[label2]{Departamento de Inform\'atica e Matem\'atica Aplicada  \\
			Universidade Federal do Rio Grande do Norte \\
			Natal, Brazil\\
			Email: \{bedregal,regivan\}@dimap.ufrn.br}
		
		
		\begin{abstract}
		\noindent
		In this paper, we introduce the concept of residuated implications derived from quasi-overlap functions on lattices and prove some related properties. In addition, we formalized the residuation principle for the case of quasi-overlap functions on lattices and their respective induced implications, as well as revealing that the class of quasi-overlap functions that fulfill the residuation principle is the same class of continuous functions according to topology of Scott. Also, Scott's continuity and the notion of densely ordered posets are used to generalize a classification theorem for residuated quasi-overlap functions. Finally, the concept of automorphisms are extended to the context of quasi-overlap functions over lattices, taking these lattices into account as topological spaces, with a view to obtaining quasi-overlap functions conjugated by the action of automorphisms.
		\end{abstract}
		
		\begin{keyword}
	Quasi-overlap functions \sep Scott topology \sep Residuated implications \sep Residuation principle \sep Lattices
			
			
			
		\end{keyword}
		
	\end{frontmatter}
	
	

\section{Introduction}

Overlap functions were introduced by Bustince et al. \cite{BUSTINCE} as a class of aggregate functions with two entries over the interval $ [0,1] $ to be applied to the image processing field. Basically, these functions transform pixel images with values at $[0,1]$. Many researchers have began to develop the theory of overlap functions to explore their
potentialities in different scenarios, such as problems involving classification or
decision making \cite{GOMEZ201657,DEMIGUEL2018, NOLASCO_recomen, QIAO20182, QIAO2018,QIAO20181,WANG2018} and  from the theoretical point of view \cite{BEDREGAL20171,DIMURO3,DIMURO2,PAIVA,ZHANG2020}. However, when you consider that pixels (or signs) may contain uncertainties, for example noise, this noise information can be captured on objects
that extend real numbers, for example intervals, fuzzy numbers or interval-valued fuzzy sets, intuitionistic fuzzy sets and soft sets, which offer different perspectives for the structures containing the uncertainties. In this case, the notion overlap needed to be extended to handle this types of objects. In this perspective, in  \cite{PAIVA} the authors generalized the notion of overlap to the context of lattices and introduced a weaker definition, called a quasi-overlap, that arises from the removal of the continuity condition. To this end, the main
properties of (quasi-) overlaps over bounded lattices, namely: convex sum, migrativity, homogeneity, idempotency, and cancellation law was investigated, as well as an overlap characterization of Archimedian overlap functions was presented.

In this paper, we propose a theoretical framework of order theory and topo\-logy with a view to establishing a connection between the notion of convergence in terms of order and Scott's topolo\-gy, to obtain a pair of residuated applications, namely: $ (O, I_O ) $, where $ O $ is a quasi-overlap function and $ I_O $ is an induced implication of $ O $. We proved that the class of quasi-overlap functions that fulfill the residuation principle is the same class of Scott-continuous quasi-overlap functions. Also, Scott's continuity and the notion of densely ordered posets are used to generalize a classification theorem for residuated quasi-overlap functions. Finally, the concept of automorphisms are extended to the context of quasi-overlap functions over lattices, taking these lattices into account as topolo\-gical spaces, with a view to obtaining quasi-overlap functions conjugated by the action of automorphisms.

To this end, the Section 2 presents an interaction between order theory and topology. The directed complete posets class (DCPO's) and the filtered complete posets class (FCPO's) as well as the lattice class are briefly explored. This section also shows  the notion of convergence via nets (a generalization of sequences for general topological spaces.). Finally, an overview of Galois connections and residuated mappings is recalled. In Section 3, we present some results investigated on residuated implications induced by fuzzy conjunctions that extend overlap functions to any lattice and some properties that these implications satisfy are presented. This section also shows how the notion of dense order coincides with the concept of density of topological spaces in Scott's topology. In section 4, presents a definition that generalizes automorphisms of bounded lattices, taking these lattices as topological spaces and the class of quasi-overlap functions  is closed under $ \Omega $-automorphisms, where $ \Omega $ represents, in this context, Scott's topology. In addition, some immediate properties of the action of Scott-automorphisms on quasi-overlap functions are explored. Finally, Section 5 gives some final remarks.

	\section{Preliminaries}

\subsection{Partial orders} 

In this subsection we will review some results of order theory, the branch of mathematics that deals among other things with order relations. For more details we recommend \cite{dickmann,gierz, Goubault, MISLOVE,scott2}.

\begin{definition}
	Let $\langle X, \leq \rangle$ be a poset. A subset $D$ of $X$ is called \emph{directed} if $D$ is not empty and $\forall u,v \in D$, $\exists w \in D$ such that $u\leq w$ and $v\leq w$. On the other hand, a subset $F$ of $X$ is called \emph{dual directed} or \emph{co-directed} or  \emph{filtered} if $F$ is not empty and $\forall u,v \in F$, $\exists w \in F$ such that $w\leq u$ and $w\leq v$.
\end{definition}

	\begin{remark}
	Since one usually can work on the dual order explicitly, notions of directed set and filtered set satisfy the principle of duality. 
\end{remark}

In what follows, it is easy to prove the

\begin{lemma}
	Let $\langle X, \leq \rangle$ be a poset. The following are valid:
\begin{enumerate}[label=(\roman*)]
	\item A non-empty chain in $X$ is directed and filtered;
	\item  For any $x \in X$, the set $\downarrow \!x=\{y \in X\, |\, y\leq x \}$ is directed and $\sup \downarrow \!x=x$;
	\item For any $x \in X$, the set $\uparrow \!x=\{y \in X\, |\, x\leq y \}$ is filtered and $\inf \uparrow \!x=x$;
	\item In a finite poset $X$, a subset of $X$ has maximal element  $\top$ if, and only if, it is directed;
	\item In a finite poset $X$, a subset of $X$ has minimal element $\bot$ if, and only if, it is filtered.
\end{enumerate}
\end{lemma}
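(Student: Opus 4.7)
The plan is to verify each clause in turn, using only the definitions together with the totality of chains and, for the last two clauses, a short finite induction; duality then halves the work.

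For (i), let $C \subseteq X$ be a nonempty chain. Given $u,v \in C$, since $C$ is totally ordered we have $u \leq v$ or $v \leq u$; in the first case take $w = v$, in the second take $w = u$. This $w \in C$ dominates both (resp.\ is dominated by both), establishing directedness and, symmetrically, filteredness.

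For (ii), the set $\downarrow\! x$ is nonempty because $x \in \downarrow\! x$. For any $u,v \in \downarrow\! x$ the element $w = x$ lies in $\downarrow\! x$ and satisfies $u \leq w$, $v \leq w$. Moreover $x$ is an upper bound for $\downarrow\! x$ by definition, and any upper bound $y$ of $\downarrow\! x$ must satisfy $x \leq y$ since $x \in \downarrow\! x$; hence $\sup \downarrow\! x = x$. Clause (iii) follows by dualizing (working in $\langle X, \geq\rangle$).

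The only clause requiring a small argument is (iv). The forward implication is immediate: if $S \subseteq X$ has a maximum $\top$, then $\top \in S$ dominates every pair $u,v \in S$, so $S$ is directed. For the converse, assume $S$ is finite and directed; enumerate $S = \{x_1,\ldots,x_n\}$ and build $w_1,\ldots,w_n \in S$ by $w_1 = x_1$ and, given $w_{k-1}$, apply directedness to $w_{k-1}$ and $x_k$ to obtain $w_k \in S$ with $w_{k-1} \leq w_k$ and $x_k \leq w_k$. Then $w_n \in S$ dominates every $x_i$, so it is the maximum $\top$ of $S$. (The word ``maximal'' in the statement is harmless here, since in a directed set any maximal element is in fact a maximum.) Clause (v) is the exact dual of (iv) and follows by replacing $\leq$ by $\geq$ throughout.

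No step is a serious obstacle; the only spot warranting attention is the finite induction in (iv), which is where finiteness of $X$ is actually used and which fails without it (an infinite directed set need not have a maximum, as $\mathbb{N}$ shows).
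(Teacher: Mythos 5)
Your proof is correct: clauses (i)--(iii) follow directly from the definitions as you argue, duality disposes of (iii) and (v), and the finite induction chaining successive upper bounds is exactly what (iv) needs, with your parenthetical remark rightly resolving the only delicate point, namely that ``maximal element $\top$'' must be read as a maximum (greatest element) for the forward implication to hold. The paper itself offers no proof of this lemma (it is introduced as ``easy to prove''), and your argument is the standard one the authors evidently intend.
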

\begin{remark}
	The sets $\downarrow \!x$ and $\uparrow \!x$ are known in the literature by principal ideal generated by $ x $ and principal filter generated by $ x $ respectively.
\end{remark}

\begin{definition}[\cite{dickmann}, p. 587]
	A poset $\langle X, \leq \rangle$ is called a complete partial order with respect to directed sets (\emph{DCPO}), if any directed subset of  $X$ has supremum in $X$. Dually, a poset $\langle X, \leq \rangle$  is called a complete partial order with respect to filtered sets (\emph{FCPO}), if any filtered subset of  $X$ has infimum in $X$. 
\end{definition}
\begin{remark}\label{QCPO}
	We remember that every poset $\langle X, \leq \rangle$ in which any two elements $x, y \in X $ have infimum and supremum, denoted respectively by $x \wedge y$ and $ x \vee y$, is called \emph{lattice}. We also remember that a lattice is said to be \emph{complete}, if for every non-empty subset $Y$ of $ X $, $Y$ has infimum and supremum in $ X $. Thus, every complete lattice is a DCPO and a FCPO.
\end{remark}

\begin{definition}[Order-density of posets] \label{order density}
	Let $ \langle X, \leq \rangle$ be a poset. A subset $Y$ of $X$ is order dense in $X$ if for any elements $ x, y \in X $ satisfying the condition $ x <y $, there exists an element $ z $ in $Y$ such that $ x <z <y$. If $X$ himself has order dense then $ \langle X, \leq \rangle$ has order dense.
\end{definition}
\subsection{Scott topology}

This subsection will discuss important issues for the development of this paper. It is assumed that the reader is familiar with some elementary notions of general topology, such as the notions of topological spaces, open and closed sets, the basis of a topology, as well as the separation axioms. Some of the results presented are well known in the literature, however, For more details we suggest \cite{gierz,Kelley,munkres_topology}.

\begin{definition}[Scott's open sets] \label{aberto}
	Let $\langle X, \leq \rangle$ be a DCPO and $A\subseteq X$. The set $A$ says a Scott open if it satisfies:\begin{enumerate}[label=(\roman*)]
		\item If $x\in A$ and $x\leq y$ then $y\in A$;
		\item If $D\subseteq X$ is a set directed and $\sup D \in A$ then $D\cap A \neq \emptyset$.
	\end{enumerate}
\end{definition}
\begin{proposition}[\cite{gierz}, Remark II-1.4]
	Let $\langle X, \leq \rangle$ be a DCPO and consider the set $$\sigma(X)=\{A\subseteq X\,|\, A \textrm{ is a Scott open set}\}.$$ Under these conditions, $\sigma(X)$ is a  topology on $X$. Moreover, $\langle X, \sigma(X) \rangle$ is a topological $T_0$\footnote{ A topological space $X$ is a $T_0$ space or Kolmogorov space if, for any two different points $x$ and $y$ there is an open set which contains one of these points and not the other.} space.
\end{proposition}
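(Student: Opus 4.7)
The plan is to verify directly the three axioms for a topology on $\sigma(X)$ and then, for any pair of distinct points, to construct an open set separating them in the Kolmogorov sense.

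For the topology axioms, $\emptyset$ and $X$ are trivially Scott open: the empty set satisfies both defining clauses vacuously, while $X$ is upward closed and meets every non-empty set, in particular every directed one. For arbitrary unions, if $U = \bigcup_{i \in I} A_i$ with each $A_i \in \sigma(X)$, then upward closure passes componentwise, and if $\sup D \in U$ for a directed $D$ then $\sup D \in A_{i_0}$ for some $i_0$, so $D \cap A_{i_0} \neq \emptyset$ and a fortiori $D \cap U \neq \emptyset$. The only genuinely non-trivial step is binary intersections: given Scott opens $A$ and $B$ and a directed $D$ with $\sup D \in A \cap B$, I would pick $a \in D \cap A$ and $b \in D \cap B$, use directedness of $D$ to produce $c \in D$ with $a \leq c$ and $b \leq c$, and exploit upward closure of $A$ and of $B$ to conclude $c \in D \cap (A \cap B)$.

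For the $T_0$ axiom, let $x \neq y$ in $X$. By antisymmetry at least one of $x \not\leq y$ or $y \not\leq x$ holds, say $x \not\leq y$. I would check that $U = X \setminus \downarrow y$ is Scott open. Upward closure follows from transitivity: if $z \not\leq y$ and $z \leq w$, then $w \not\leq y$. For the directed-sup condition, suppose $\sup D \in U$ with $D$ directed; if one had $D \subseteq \downarrow y$ then $y$ would be an upper bound of $D$ forcing $\sup D \leq y$, contradicting $\sup D \in U$, hence some $d \in D$ lies in $U$. Finally $x \in U$ while $y \notin U$, giving the required separation.

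The proof is essentially bookkeeping, and I anticipate no serious obstacle. The most delicate point is the interaction between the upward-closure clause and the directed-sup clause in the intersection step, where the two clauses of the definition must be used in tandem rather than checked independently; this is also the structural reason why $\sigma(X)$ fails in general to be closed under arbitrary intersections.
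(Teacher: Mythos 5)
Your proof is correct: the verification of the three topology axioms (with the directed element $c$ dominating both $a$ and $b$ in the intersection step) and the $T_0$ separation via the Scott-open complement $X\setminus\downarrow\! y$ is exactly the standard argument for this result, which the paper itself does not prove but simply cites as Remark II-1.4 of \cite{gierz}. The only cosmetic omission is the (trivial) remark that $A\cap B$ is upward closed, and your closing observation about why arbitrary intersections fail is a correct aside.
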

\begin{remark}
	$\sigma(X)$ it's well-known Scott's topology on $X$. 
\end{remark}

We recall that the notation $\langle X, \mathcal{T}, \leq \rangle$ is used to denote a set $X$ endowed with a topology $\mathcal{T}$ and a order relation ``$\leq$'' on $X$. Such a structure is called \textit{ordered topolo-gical space}. From now on, every DCPO will be considered an ordered topological space, in which the topology considered is Scott's topology.

\begin{proposition}[\cite{gierz}, Proposition II-2.1] \label{TD}
	Given a function $f:X \to Y$, where $X$ and $Y$ are DCPO's. The following conditions are equivalents: \begin{enumerate}[label=\textbf{(\roman*)}]
		\item $f$ is continuous with respect to Scott's topology: $f^{-1}(V)\in \sigma(X)$, for all $V\in \sigma(Y)$;\label{(i)}
		\item $f$ preserves supremum of directed sets, that is, $f$ preserves order and satisfies: $f(\sup D) = \sup f(D)$, for every directed set $D$ of $X$.
	\end{enumerate}
\end{proposition}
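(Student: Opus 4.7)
The plan is to prove the two implications separately, showing first that Scott continuity forces monotonicity plus preservation of directed suprema, and then the converse.

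For (i) $\Rightarrow$ (ii), I would first establish monotonicity of $f$ by exploiting a carefully chosen Scott-open set. Given $y \in Y$, the complement $Y \setminus \downarrow\! f(y) = \{z \in Y : z \not\leq f(y)\}$ is upper-closed and, by a quick verification, satisfies the directed-sup condition of Definition~\ref{aberto} (if $\sup D \not\leq f(y)$, then not all $d \in D$ can lie below $f(y)$). So it belongs to $\sigma(Y)$, and its preimage lies in $\sigma(X)$, hence is upper-closed. Assuming $x \leq y$ with $f(x) \not\leq f(y)$ then forces $y$ into this preimage, yielding the contradiction $f(y) \not\leq f(y)$. Once monotonicity is in hand, $f(D)$ is directed whenever $D$ is, and the inequality $\sup f(D) \leq f(\sup D)$ is automatic. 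The reverse inequality will come from applying Scott continuity to the open set $\{z : z \not\leq \sup f(D)\}$: if $f(\sup D) \not\leq \sup f(D)$, continuity produces some $d \in D$ with $f(d) \not\leq \sup f(D)$, contradicting the definition of the supremum.

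For (ii) $\Rightarrow$ (i), I would take $V \in \sigma(Y)$ and directly verify the two defining conditions for $f^{-1}(V)$ to be Scott open. Upper-closedness follows from monotonicity together with the upper-closedness of $V$. For the directed-sup condition, given a directed $D \subseteq X$ with $\sup D \in f^{-1}(V)$, preservation of directed suprema gives $\sup f(D) = f(\sup D) \in V$; since $f(D)$ is directed (monotonicity again) and $V$ is Scott open, some $f(d) \in V$, and thus $d \in D \cap f^{-1}(V)$.

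The step that requires the most care is the monotonicity argument in (i) $\Rightarrow$ (ii), since it is the only place where the proof has to manufacture a Scott-open set by hand rather than simply use the hypothesis. Once that is settled, the remaining arguments are almost formal, reducing to translating the two axioms of Scott openness back and forth across the preimage. No further machinery beyond Definition~\ref{aberto} is needed.
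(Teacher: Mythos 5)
Your proof is correct. Note that the paper does not prove this proposition at all: it is imported verbatim from Gierz et al.\ (Proposition II-2.1), so there is no in-paper argument to compare against; what you give is the standard proof directly from Definition~\ref{aberto}, and every step checks out — the set $Y\setminus\downarrow\! f(y)$ is indeed Scott open (upper-closed, and if $\sup D\not\leq f(y)$ some $d\in D$ already fails $d\leq f(y)$), monotonicity then follows from upper-closedness of its preimage, the inequality $\sup f(D)\leq f(\sup D)$ is immediate from monotonicity and existence of $\sup f(D)$ in the DCPO $Y$, the reverse inequality follows by applying the directed-sup axiom to $f^{-1}\bigl(Y\setminus\downarrow\!\sup f(D)\bigr)$, and the converse direction is the routine transfer of the two axioms across the preimage. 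The only blemish is notational: in the monotonicity step you write ``given $y\in Y$'' while immediately applying $f$ to $y$; the element must be taken in $X$ (assume $x\leq y$ in $X$ and use $V=Y\setminus\downarrow\! f(y)$), which is plainly what you intend and does not affect the argument.
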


Similarly, the dual theorem can be demonstrated.

\begin{proposition}\label{TF}
	Given a function $f:X \to Y$, where $X$ and $Y$ are FCPO's. The following conditions are equivalents: \begin{enumerate}[label=\textbf{(\roman*)}]
		\item $f$ is continuous with respect to Scott's topology: $f^{-1}(V)\in \sigma(X)$, for all $V\in \sigma(Y)$;
		\item $f$ preserves infimum of filtered sets, that is, $f (\inf F) = \inf f(F)$, for every set filtered $F$ of $X$.
	\end{enumerate}
\end{proposition}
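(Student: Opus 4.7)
The plan is to obtain Proposition \ref{TF} by pure order duality from Proposition \ref{TD}, as already flagged by the sentence ``Similarly, the dual theorem can be demonstrated''. Write $X^{\partial}$ for the dual poset of $X$, i.e.\ the same underlying set equipped with the reversed order $\leq^{\partial}$ defined by $x \leq^{\partial} y \Leftrightarrow y \leq x$. Under this reversal, filtered subsets of $X$ correspond exactly to directed subsets of $X^{\partial}$, and infima in $X$ correspond to suprema in $X^{\partial}$. In particular, $X$ is an FCPO if and only if $X^{\partial}$ is a DCPO, and likewise for $Y$.

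First I would make explicit that the Scott topology on an FCPO $X$ appearing in the statement is, by definition, the Scott topology on $X^{\partial}$ in the sense of Definition \ref{aberto}. Unfolding this, a set $A\subseteq X$ is Scott open precisely when (a) $x\in A$ and $y\leq x$ imply $y\in A$, and (b) whenever $F\subseteq X$ is filtered with $\inf F \in A$ one has $F\cap A \neq \emptyset$. Thus, as set maps, $f:X\to Y$ is continuous with respect to the Scott topologies of the FCPO's $X,Y$ if and only if $f:X^{\partial}\to Y^{\partial}$ is continuous with respect to the Scott topologies of the DCPO's $X^{\partial}, Y^{\partial}$.

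Next I would observe that $f:X\to Y$ preserves filtered infima if and only if $f:X^{\partial}\to Y^{\partial}$ preserves directed suprema: indeed, if $F$ is filtered in $X$ then $F$ is directed in $X^{\partial}$, and $f(\inf F)=\inf f(F)$ in $Y$ translates into $f(\sup^{\partial} F)=\sup^{\partial} f(F)$ in $Y^{\partial}$; moreover order preservation is self-dual. Applying Proposition \ref{TD} to $f:X^{\partial}\to Y^{\partial}$ then gives the equivalence between Scott-continuity and preservation of directed suprema, which translates back to the equivalence in Proposition \ref{TF}.

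The only genuine step requiring care, rather than a genuine obstacle, is the bookkeeping in the first paragraph: making sure that the notion of ``Scott's topology on an FCPO'' used in the statement is the one induced via the order-reversal, so that Definition \ref{aberto} can be legitimately invoked on $X^{\partial}$ and $Y^{\partial}$. Once this identification is in place, the proposition is a direct transcription of Proposition \ref{TD} through the duality $X\leftrightarrow X^{\partial}$, and no new analytic content is needed.
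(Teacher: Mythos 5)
Your proposal is correct and is essentially the paper's own argument: the paper gives no detailed proof of Proposition \ref{TF}, merely stating ``Similarly, the dual theorem can be demonstrated,'' i.e.\ precisely the order-duality reduction to Proposition \ref{TD} that you carry out. Your explicit bookkeeping that $\sigma(X)$ on an FCPO must be read as the Scott topology of the dual DCPO $X^{\partial}$ (downward-closed sets inaccessible by filtered infima) is exactly the reading needed for the statement to be true, and is a point the paper leaves implicit.
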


Therefore, considering the Remark \ref{QCPO} and the Propositions \ref{TD} and \ref{TF}, the next result is quickly obtained.

\begin{theorem}\label{conecta}
	Given a function $f:X \to Y$, where $X$ and $Y$ are complete lattices. The following conditions are equivalents: \begin{enumerate}[label=\textbf{(\roman*)}]
		\item $f$ is continuous with respect to Scott's topology: $f^{-1}(V)\in \sigma(X)$, for all $V\in \sigma(Y)$;\label{(i)}
		\item $f$ preserves supremum of directed sets and infimum of filtered sets: $f$ preserves order and satisfies: $f(\sup D) = \sup f(D)$ and $f (\inf F) = \inf f(F)$, for every directed set $D$ and every set filtered $F$, both subsets of $X$.
	\end{enumerate}
\end{theorem}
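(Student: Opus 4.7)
The plan is to deduce Theorem \ref{conecta} as an immediate synthesis of Propositions \ref{TD} and \ref{TF}, using Remark \ref{QCPO} as the bridge. Indeed, the remark guarantees that every complete lattice is at once a DCPO and an FCPO, so both previous propositions become simultaneously applicable to $f:X\to Y$, and the theorem's statement is essentially the conjunction of their two conclusions.

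For the forward direction (i) $\Rightarrow$ (ii), I would assume Scott-continuity of $f$ and split into two invocations. Since $X$ and $Y$ are in particular DCPOs, Proposition \ref{TD} yields that $f$ is order-preserving and $f(\sup D)=\sup f(D)$ for every directed $D\subseteq X$. Since they are also FCPOs, Proposition \ref{TF} yields $f(\inf F)=\inf f(F)$ for every filtered $F\subseteq X$. Conjoining these two conclusions is exactly clause (ii). For the converse (ii) $\Rightarrow$ (i), I would start from the hypothesis that $f$ preserves both directed suprema and filtered infima; then the ``sup'' half, via Proposition \ref{TD}, secures continuity with respect to the Scott topology viewed from the DCPO side, while the ``inf'' half, via Proposition \ref{TF}, secures it from the FCPO side. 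Together, these two continuity statements constitute clause (i).

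The only subtle point, and what I would flag as the main conceptual obstacle rather than a computational one, is making explicit that on a complete lattice the Scott topology invoked in Theorem \ref{conecta} simultaneously encodes the order-theoretic data of directed suprema and filtered infima; once this identification is granted (as Remark \ref{QCPO} invites us to grant), the argument is a clean bookkeeping combination of the two preceding propositions and no further technical work is required. Because of this, I would write the proof as a short paragraph of the form ``follows by applying Propositions \ref{TD} and \ref{TF} in view of Remark \ref{QCPO}'' rather than unfolding any of the Scott-open-set manipulations a second time.
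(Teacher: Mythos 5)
Your proposal matches the paper's own derivation exactly: the paper gives no separate argument for Theorem \ref{conecta}, stating only that it is ``quickly obtained'' by combining Propositions \ref{TD} and \ref{TF} in view of Remark \ref{QCPO}, which is precisely your two-invocation bookkeeping. The subtlety you flag---that the Scott topology on a complete lattice must be read as carrying both the directed-supremum and filtered-infimum data so that Proposition \ref{TF} applies alongside Proposition \ref{TD}---is likewise left implicit by the paper, so your treatment is faithful to it.
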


\subsection{Scott-convergence}

An interesting concept in topological spaces is that of convergence, as well as the properties of boundary points. In this section we will discuss these and other subject in terms of nets. For more details we suggest \cite{gierz,Kelley}.

\begin{definition}
	A net in a set $X$ is a function $j\mapsto x_j: J \to X$ whose domain is a join-semilattice \footnote{We recall that $J$ is a join-semilattice if $J$ is a partially ordered set that has a join (a least upper bound) for any nonempty finite subset of $J$.}. Nets are also denoted by $(x_j)_{j \in J}$, by $(x_j)$ or $x_j$, whenever the context is clear. If the set $X$ is provided with an order, then the net $x_j$ is called monotonic if $i\leq j$ implies $x_i \leq x_j$. A subnet of $(x_j)$ is any net of the form $(x_{\psi(i)})_{i \in I}$, where $I$ is a directed set  and there is an application $\psi:I \to J$ such that for each $j\in J$ there is eventually $\psi(i) \geq j$ in $I$.
\end{definition}

In the following definition, it is verified that the convergence of nets is a natural generalization of the convergence of sequences.

\begin{definition}\label{converg}
	A net $(x_j)_{j \in J}$ in a topological space $X$ converges to $x \in X$ (notation $x_j \to x$) if, whenever $U\subseteq X$ is open and $x \in U$, so  there is a $i \in J$ such that $x_j \in U$ for all $j \geq i$. 
\end{definition}

\begin{remark}
	Each subnet of a net that converges to a point (relative to a topological space) converges to the same limit. \cite[Affirmation b, p. 74]{Kelley}
\end{remark}

We still recall that given a topological space $X$, a collection $\mathcal{A}=\{A_{\lambda}\}_{\lambda \in I}$ of subsets of $X$ is called a covering of  $X$, when $X\subseteq \bigcup_{\lambda \in I} A_{\lambda}$. We say $\mathcal{A}$ is an open (closed) cover of $X$ when all elements of the cover are opened (closed). A topological space $X$ is called \textit{compact} when all open covering of $X$ has a finite subcollection that covers it. We say that $Y$ is a compact subset of $X$ if
$Y$, with the topology induced by $X$, is a compact topological space.

We now present a well-known characterization of topological space compactness via nets.

\begin{proposition}[\cite{Kelley}, Theorem 2, p. 136]\label{compa}
	A topological space $X$ is compact if and only if each net in $X$ admits a subnet converging to a point of $X$.
\end{proposition}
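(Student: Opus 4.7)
My plan is to prove both implications of Proposition \ref{compa} using the classical equivalence between compactness and the finite intersection property (FIP) for closed sets: $X$ is compact if and only if every family of closed subsets of $X$ with the FIP has non-empty intersection. I would state this equivalence at the beginning (it follows by taking complements in the open-cover definition given just before the proposition) and then address the two directions separately, being careful to use precisely the notion of subnet from the paper (i.e.\ a map $\psi:I\to J$ with $I$ directed such that $\psi(i)\geq j$ eventually for every $j\in J$).

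For the forward direction, assume $X$ is compact and let $(x_j)_{j\in J}$ be a net. For each $j\in J$ consider the \emph{tail} $T_j=\{x_i : i\geq j\}$. Since $J$ is a join-semilattice the family $\{T_j\}_{j\in J}$ is downward-directed under reverse inclusion, so the family of closures $\{\overline{T_j}\}_{j\in J}$ has the FIP. By compactness there exists $x\in\bigcap_{j\in J}\overline{T_j}$. I will then construct a subnet converging to $x$ by setting
\[
I=\{(U,j)\in\mathcal{N}(x)\times J : T_j\cap U\neq\emptyset\},
\]
ordered by $(U_1,j_1)\leq(U_2,j_2)$ iff $U_2\subseteq U_1$ and $j_1\leq j_2$, and defining $\psi(U,j)$ to be any index $i\geq j$ with $x_i\in U$ (which exists since $x\in\overline{T_j}$). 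One checks $I$ is directed, the cofinality condition on $\psi$ holds by fixing any $U_0\in\mathcal{N}(x)$ and taking $(U,j)\geq(U_0,j_0)$, and $x_{\psi(U,j)}\in U$ establishes $x_{\psi(i)}\to x$ via Definition \ref{converg}.

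For the converse, assume every net admits a convergent subnet and let $\mathcal{F}=\{F_\lambda\}_{\lambda\in\Lambda}$ be a family of closed sets with the FIP. Let $J$ be the collection of non-empty finite subsets of $\Lambda$, ordered by inclusion; this is a join-semilattice. For each $S\in J$ the FIP supplies some $x_S\in\bigcap_{\lambda\in S}F_\lambda$, which defines a net $(x_S)_{S\in J}$. By hypothesis there is a subnet $(x_{\psi(i)})_{i\in I}$ converging to some $x\in X$. I will show $x\in F_\lambda$ for every $\lambda\in\Lambda$: fixing $\lambda$, the cofinality of $\psi$ gives $\psi(i)\supseteq\{\lambda\}$ eventually, so $x_{\psi(i)}\in F_\lambda$ eventually; since $F_\lambda$ is closed, its complement is an open set which cannot be visited cofinally, forcing $x\in F_\lambda$. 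Hence $\bigcap_\lambda F_\lambda\neq\emptyset$ and $X$ is compact.

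The main obstacle is bookkeeping around the paper's definition of subnet, which is the Willard/Kelley-style definition (only requiring eventual cofinality of $\psi$, not monotonicity). In particular, the forward-direction construction has to genuinely use this flexibility: a naive attempt to index the subnet only by neighborhoods of $x$ would fail to guarantee cofinality in $J$, which is why the product index set $\mathcal{N}(x)\times J$ together with the tail condition $T_j\cap U\neq\emptyset$ is essential. Verifying that this $I$ is directed and that $\psi$ satisfies the eventual cofinality property is the one piece where care is needed; the remaining steps are straightforward manipulations of open covers and closures.
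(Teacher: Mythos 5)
Your proposal is correct: both directions go through, and the construction of the subnet indexed by $\mathcal{N}(x)\times J$ (with $\psi(U,j)$ chosen in the tail $T_j\cap U$) is compatible with the Kelley-style notion of subnet used in the paper, as is the finite-intersection-property argument for the converse. The paper itself gives no proof of Proposition~\ref{compa} --- it is quoted from \cite{Kelley} --- and your argument is essentially the same classical one found there (cluster points via the FIP on closures of tails, then extraction of a convergent subnet), so there is nothing to reconcile beyond minor bookkeeping such as noting that your index set $I$ is in fact all of $\mathcal{N}(x)\times J$ since $x\in\overline{T_j}$ for every $j$.
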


In general, a net in a topological space $ X $ can converge to several different points. For example, consider the two element set $\{a,b\}$ with topology \linebreak $\{\emptyset,\{b\},\{a,b\}\}$. Then every net, that converges to $a$ also converges to $b$ and the net, which is constant $b$ converges only to $b$. However, the following proposition points out spaces in which the convergence is unique in the sense that if a net $ s_n $ converges to $ s $ and also to a point $ t $, then $ s = t $. Before, we recall some useful notions. By a neighbourhood of a subset $A$ (in particular of a singleton and therefore of a point) in a topological
space $X$, we mean a subset of $X$ that contains an open set containing $A$. We say that a topological space $X$ is \textit{Hausdorff} if  any two distinct points of $X$ have disjoint neighbourhoods.

\begin{proposition}[\cite{Kelley}, Theorem 3, p. 67] \label{2.18}
	A topological space $X$ is a Hausdorff space if and only if each net in $X$ converges to at most one point.
\end{proposition}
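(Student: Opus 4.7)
The plan is to prove both implications of the biconditional separately, using the join-semilattice structure of a net's domain in one direction and a neighborhood-filter construction in the other.

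For the forward direction, I would assume $X$ is Hausdorff and argue by contradiction. Suppose a net $(x_j)_{j\in J}$ converges to two distinct points $x,y\in X$. By the Hausdorff property, pick disjoint open neighborhoods $U$ of $x$ and $V$ of $y$. By Definition \ref{converg}, there exist $i_1,i_2\in J$ with $x_j\in U$ for all $j\geq i_1$ and $x_j\in V$ for all $j\geq i_2$. Since the domain $J$ of a net is a join-semilattice, the join $i=i_1\vee i_2$ exists in $J$, and then $x_i\in U\cap V=\emptyset$, a contradiction. Hence the net converges to at most one point.

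For the converse, I would argue contrapositively: if $X$ is not Hausdorff, construct a net that converges to two distinct points. Pick distinct $x,y\in X$ such that $U\cap V\neq\emptyset$ for every neighborhood $U$ of $x$ and every neighborhood $V$ of $y$. Let $\mathcal{N}_x$ and $\mathcal{N}_y$ denote the neighborhood filters of $x$ and $y$, and set $J=\mathcal{N}_x\times\mathcal{N}_y$ ordered by reverse inclusion in each coordinate, i.e.\ $(U_1,V_1)\leq(U_2,V_2)$ iff $U_2\subseteq U_1$ and $V_2\subseteq V_1$. Since the intersection of two neighborhoods of a point is again a neighborhood, $J$ is a join-semilattice under $(U_1,V_1)\vee(U_2,V_2)=(U_1\cap U_2,\,V_1\cap V_2)$. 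Using the failure of the Hausdorff property, for each $(U,V)\in J$ choose $x_{(U,V)}\in U\cap V$. To see that $x_{(U,V)}\to x$, take any open neighborhood $U_0$ of $x$ and the index $i_0=(U_0,X)$; for every $(U,V)\geq i_0$ we have $U\subseteq U_0$, hence $x_{(U,V)}\in U\subseteq U_0$. By symmetry $x_{(U,V)}\to y$ as well, contradicting the assumption that each net converges to at most one point.

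The main obstacle I expect is the backward direction, specifically producing a domain that is genuinely a join-semilattice (as required by the definition of a net in this paper) while remaining compatible with the neighborhood filters at both points. The product-of-filters construction with coordinatewise reverse inclusion is the natural fix, since intersections of neighborhoods provide the joins ``for free,'' and the non-Hausdorff hypothesis is what guarantees that a representative of $U\cap V$ can always be chosen. The forward direction is essentially mechanical once one remembers to invoke the join-semilattice property of $J$ to combine the two tail-indices $i_1$ and $i_2$.
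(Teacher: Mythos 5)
Your proof is correct; the paper does not prove this proposition itself but imports it from \cite{Kelley} (Theorem 3, p.~67), and your argument is essentially that standard one: disjoint neighbourhoods combined via the join $i_1\vee i_2$ in the forward direction, and in the converse a net indexed by pairs of neighbourhoods of $x$ and $y$ ordered by coordinatewise reverse inclusion, with a point chosen in each intersection. Your check that $\mathcal{N}_x\times\mathcal{N}_y$ really is a join-semilattice (joins given by coordinatewise intersection) is exactly the detail the paper's definition of a net requires, so nothing is missing.
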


In view of proposition \ref{2.18} above, the notion of lower bound and upper bound for nets is given below. This is a particular case of Definition II-1.1 in \cite{gierz}, for the case where $X$ is a complete lattice.

\begin{definition} \label{def_lim1}
	Given a complete lattice $X$ and a net $(x_i)_{i \in J}$ in $X$, the lower limit of $(x_i)_{i \in J}$ is:
	
	\begin{equation}
	\underline{\lim}_{i\in J} x_i= \sup_{i\in J} \inf_{j \geq i} x_j 
	\end{equation}
	and its upper limit is:
	
	\begin{equation}
	\overline{\lim}_{i\in J} x_i= \inf_{i\in J} \sup_{j \geq i} x_j 
	\end{equation} Let $S$ be the class of those elements $u \in X$ such that $u \leq \underline{\lim}_{i\in J} x_i$ and $T$ be the class of those elements $w \in X$ such that $\overline{\lim}_{i\in J} x_i\leq w$. For each such elements we say that $u$ is a lower $S$-limit and $w$ is a upper $T$-limit of $(x_i)_{i \in J}$. In this case we write respectively  $u \equiv_S \underline{\lim}_{i\in J}\ x_i$ and $w \equiv_T \overline{\lim}_{i\in J}\ x_i$. 
\end{definition}

%
%
%

\begin{proposition} [\cite{gierz}, prop. II-2.1] \label{lat_cont}
	Let $X$ and $Y$ be DCPO's and $f:X \rightarrow Y$  a function. The following conditions are equivalent:
	\begin{enumerate}
		\item $f$ preserves suprema of directed sets, i.e. $f$ is order preserving and 
		\begin{equation} \label{scott_cont}
		f(\sup \Delta) = \sup \{f(x) \ | \ x \in \Delta\}
		\end{equation}
		for all directed subset $\Delta$ of $X$,
		\item $f$ is order preserving and 
		\begin{equation}\label{88}
		f(\underline{\lim}_{i\in J} x_i) \leq \underline{\lim}_{i\in J} f(x_i)
		\end{equation}
		for any net $(x_i)_{i \in J}$ on $X$ such that $\underline{\lim}_{i\in J} x_i$ and $\underline{\lim}_{i\in J} f(x_i)$ both exist.
	\end{enumerate}
\end{proposition}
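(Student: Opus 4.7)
The plan is to exploit the structural identity $\underline{\lim}_{i\in J} x_i = \sup_{i\in J} y_i$ with $y_i := \inf_{j\geq i} x_j$, together with the observation that for a monotonic net one has $\inf_{j\geq i} x_j = x_i$, so that $\underline{\lim}_j x_j$ collapses to $\sup_j x_j$. With these identities in hand, both implications reduce to short computations using order preservation.

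For $(1)\Rightarrow(2)$: given a net $(x_i)_{i\in J}$ for which both lower limits exist, set $y_i=\inf_{j\geq i} x_j$. The family $(y_i)_{i\in J}$ is directed, because for any $i,i'\in J$ the join $i\vee i'$ exists in the semilattice $J$ and the tail beyond $i\vee i'$ is contained in each of the tails beyond $i$ and beyond $i'$, forcing $y_{i\vee i'}\geq y_i$ and $y_{i\vee i'}\geq y_{i'}$. Applying hypothesis (1) to this directed family gives $f(\underline{\lim}_i x_i)=f(\sup_i y_i)=\sup_i f(y_i)$. Order preservation then yields $f(y_i)\leq f(x_j)$ for every $j\geq i$, whence $f(y_i)\leq \inf_{j\geq i} f(x_j)$, and taking suprema over $i$ produces
\[
f(\underline{\lim}_i x_i)\;\leq\;\sup_i \inf_{j\geq i} f(x_j)\;=\;\underline{\lim}_i f(x_i).
\]

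For $(2)\Rightarrow(1)$: let $\Delta\subseteq X$ be directed with $s=\sup\Delta$. I would view $\Delta$ as the index of the monotonic net $x_\delta=\delta$; strictly adhering to the paper's convention, this first requires replacing $\Delta$ by a cofinal join-semilattice expansion (for instance the lattice of nonempty finite subsets of $\Delta$ with any choice of upper bounds in $\Delta$ supplied by directedness --- a bookkeeping move that does not change the relevant suprema). Monotonicity collapses both lower limits to ordinary suprema: $\underline{\lim}_\delta x_\delta = s$ and $\underline{\lim}_\delta f(x_\delta) = \sup f(\Delta)$, where the latter exists because $f(\Delta)$ is directed in the DCPO $Y$ by order preservation. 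Hypothesis (2) then gives $f(s)\leq \sup f(\Delta)$, while order preservation supplies the reverse inequality, yielding $f(\sup\Delta)=\sup f(\Delta)$.

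The main obstacle I expect is not analytic but notational: the paper's insistence that net indices be join-semilattices clashes with the fact that a directed set is not generally a join-semilattice. The finite-subsets construction above bridges this gap, and once it is accepted the remainder is pure manipulation of the definition of $\underline{\lim}$ together with the directedness of the tail-infimum family.
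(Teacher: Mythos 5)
The paper does not prove this proposition at all --- it is quoted from Gierz et al.\ (Proposition II-2.1) without proof --- so the comparison can only be with the standard argument, and your proposal is essentially that argument: for $(1)\Rightarrow(2)$ the family of tail infima $y_i=\inf_{j\geq i}x_j$ is directed and order preservation gives $f(y_i)\leq\inf_{j\geq i}f(x_j)$, which is exactly how the inequality is obtained in the literature; for $(2)\Rightarrow(1)$ one turns the directed set $\Delta$ into a monotone net and uses that for a monotone net the lower limit is the supremum. That part of your write-up is correct, including the observation that existence of $\sup_i f(y_i)$ is supplied by hypothesis (1) itself and that each $\inf_{j\geq i}f(x_j)$ exists because $\underline{\lim}_i f(x_i)$ is assumed to exist.

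The one genuine soft spot is the phrase ``with \emph{any} choice of upper bounds'' in the $(2)\Rightarrow(1)$ direction. With an arbitrary selection $F\mapsto x_F$ of upper bounds, the net indexed by the nonempty finite subsets of $\Delta$ need \emph{not} be monotone, and then your next sentence (``monotonicity collapses both lower limits to ordinary suprema'') has no basis; worse, in a mere DCPO the tail infima $\inf_{F'\supseteq F}x_{F'}$ need not exist at all, so $\underline{\lim}_F x_F$ may fail to exist and hypothesis (2), which only applies to nets whose two lower limits exist, cannot be invoked. The fix is easy but must be said: choose the upper bounds recursively on the cardinality of $F$, taking $x_F\in\Delta$ to be an upper bound of the finite set $F\cup\{x_{F'}\colon\emptyset\neq F'\subsetneq F\}$ (such an element exists by directedness), so that $F\subseteq F'$ implies $x_F\leq x_{F'}$. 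With this isotone selection the net is monotone, $\inf_{F'\supseteq F}x_{F'}=x_F$, both lower limits exist and equal $\sup\Delta$ and $\sup f(\Delta)$ respectively (the latter existing because $f(\Delta)$ is directed in the DCPO $Y$), and the rest of your computation goes through verbatim. With that correction the proof is complete and is the same route the cited source takes, adapted to the paper's join-semilattice convention for net indices.
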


Similarly, the dual proposition can be demonstrated.

\begin{proposition}\label{lat_cont dual}
	Let $X$ and $Y$ be FCPO's and $f:X \rightarrow Y$  a function. The following conditions are equivalent:
	\begin{enumerate}
		\item $f$ preserves infimum of filtered sets, i.e. $f$ is order preserving and 
		\begin{equation} \label{scott_cont dual}
		f(\inf \Delta) = \inf \{f(x) \ | \ x \in \Delta\}
		\end{equation}
		for all filtered subset $\Delta$ of $X$;
		\item $f$ is order preserving and 
		\begin{equation}
		f(\overline{\lim}_{i\in J} x_i) \geq \overline{\lim}_{i\in J} f(x_i)\label{eq10}
		\end{equation}
		for any net $(x_i)_{i \in J}$ on $X$ such that $\overline{\lim}_{i\in J} x_i$ and $\overline{\lim}_{i\in J} f(x_i)$ both exist.
	\end{enumerate}
\end{proposition}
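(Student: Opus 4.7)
The plan is to derive Proposition~\ref{lat_cont dual} from Proposition~\ref{lat_cont} by passing to the opposite order. Concretely, I would consider the reversed posets $\langle X, \geq\rangle$ and $\langle Y, \geq\rangle$: a filtered subset of $X$ is exactly a directed subset of $\langle X,\geq\rangle$, and the infimum of such a set in $X$ is its supremum in $\langle X,\geq\rangle$, so the hypothesis that $X$ and $Y$ are FCPO's is equivalent to $\langle X,\geq\rangle$ and $\langle Y,\geq\rangle$ being DCPO's. Moreover, $f:X\to Y$ is order preserving with respect to $\leq$ if and only if it is order preserving between the opposite posets, and in that case the assertion ``$f$ preserves infima of filtered subsets of $X$'' is literally the assertion ``$f:\langle X,\geq\rangle\to\langle Y,\geq\rangle$ preserves suprema of directed subsets''. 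Thus condition~(1) of the proposition becomes the first condition of Proposition~\ref{lat_cont} applied to $f$ viewed between the opposite posets.

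The next step is to verify that the inequality (\ref{eq10}) is the translation of (\ref{88}) under this dualization. Given a net $(x_i)_{i\in J}$, the index set $J$ and its order are untouched (only the orders on the codomains $X$ and $Y$ are reversed), so using $\sup^{\geq}=\inf$ and $\inf^{\geq}=\sup$ one has
\[
\underline{\lim}^{\geq}_{i\in J} x_i \;=\; \sup^{\geq}_{i\in J}\inf^{\geq}_{j\geq i} x_j \;=\; \inf_{i\in J}\sup_{j\geq i} x_j \;=\; \overline{\lim}_{i\in J} x_i,
\]
and likewise for the net $(f(x_i))_{i\in J}$ in $Y$. Since the relation ``$\leq$'' in $\langle Y,\geq\rangle$ is the relation ``$\geq$'' in $\langle Y,\leq\rangle$, inequality (\ref{88}) read in the opposite orders becomes exactly inequality (\ref{eq10}), and the existence hypotheses on the relevant limits transfer verbatim. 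Consequently, the equivalence (1)$\Leftrightarrow$(2) of Proposition~\ref{lat_cont}, applied to $f$ between $\langle X,\geq\rangle$ and $\langle Y,\geq\rangle$, is precisely the claimed equivalence for FCPO's.

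For readers who prefer a self-contained argument, I would also sketch the direct route. For (1)$\Rightarrow$(2), take a net $(x_i)_{i\in J}$ for which both upper limits exist and set $b_i=\sup_{j\geq i}x_j$; because $J$ is a join-semilattice (hence directed), the family $\{b_i\}_{i\in J}$ is filtered in $X$, hypothesis (1) yields $f(\inf_i b_i)=\inf_i f(b_i)=f(\overline{\lim} x_i)$, and monotonicity of $f$ gives $\sup_{j\geq i}f(x_j)\leq f(b_i)$, whence $\overline{\lim} f(x_i)\leq f(\overline{\lim} x_i)$. For (2)$\Rightarrow$(1), one realises a filtered subset $\Delta\subseteq X$ as the values of a net whose upper limit is $\inf\Delta$ and applies (\ref{eq10}).

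The main obstacle in a purely direct proof would be the direction (2)$\Rightarrow$(1): producing a net whose upper limit realises $\inf\Delta$ requires an index set that is a join-semilattice (as prescribed by the paper's definition of a net), so one cannot simply index by $\Delta$ itself under the reverse order, which is only a directed set in general. One can circumvent this by indexing over finite nonempty subsets of $\Delta$ under inclusion, together with a choice of lower bound in $\Delta$ for each such subset; but this bookkeeping is avoided entirely by the duality argument, which is why I would adopt the dualization as the main line of proof.
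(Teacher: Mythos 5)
Your proposal is correct and matches the paper's intent: the paper itself offers no detailed argument, simply asserting that the dual of Proposition~\ref{lat_cont} ``can be demonstrated similarly,'' and your order-reversal argument is exactly the precise form of that duality, including the correct observation that only the orders on $X$ and $Y$ (not on the index join-semilattice $J$) are reversed, so $\underline{\lim}$ becomes $\overline{\lim}$ and (\ref{88}) becomes (\ref{eq10}) with the existence hypotheses transferring verbatim. Your supplementary direct sketch is also sound in spirit (for (2)$\Rightarrow$(1) the choice of lower bounds over finite subsets must be made coherently, e.g.\ antitone, so that the inner suprema exist in a mere FCPO), but since you adopt the duality argument as the main line, the proof stands as essentially the paper's own.
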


Notice that all complete lattice is a DCPO (FCPO) in which $\underline{\lim}_{i\in J} x_i$ and $\underline{\lim}_{i\in J} f(x_i)$ ($\overline{\lim}_{i\in J} x_i$ and $\overline{\lim}_{i\in J} f(x_i)$) always exist \cite{gierz}. Hence Propositions \ref{lat_cont} and \ref{lat_cont dual} hold for complete lattices. 

Theorem \ref{conecta} establishes a connection between convergence given in terms of lower bound order theory, or liminfs, and Scott's topology. In this perspective, Equations (\ref{scott_cont}) and (\ref{scott_cont dual}) generalize the notion of  continuity of functions on lattices. These facts motivate the following definition.

\begin{definition}\label{continuous}
	Let $X$ and $Y$ be two complete lattices. A function $f:X \rightarrow Y$ is Scott-continuous (simply continuous, if the context is clear) if and only if it satisfies any of the Equations (\ref{scott_cont}) or (\ref{scott_cont dual}).
\end{definition}

\begin{remark} \label{rem-cont-fin-L} Note that if $X$ is finite so any function $f:X \rightarrow Y$ is continuous because for each directed set $\Delta$ of $X$, $\sup \Delta\in \Delta$ and for each filtered set $\Delta$, $\inf \Delta\in \Delta$.
\end{remark}
\subsection{Galois connections and the residuation principle}

Galois connections generalize the correspondence between subgroups and \linebreak fields investigated in Galois theory. In order theory, a Galois connection is a particular correspondence between posets and is closely related to the concept of residuated functions. In turn, the residuated functions, besides being important in themselves, have a very relevant role in the characterization of partial orders. In this section we provide the basics necessary for the development of this paper. More details at \cite{Birkhoff, blyth,davey,Galatos,ore}.

\begin{definition}
A monotonic Galois connection from a poset $X$  to a poset $Y$ is a pair $(\alpha,\beta)$ of monotonic applications $X\stackrel{\alpha}{\to}Y\stackrel{\beta}{\to}X$ such that for all $x \in X$ and $y \in Y$, one has that $$\alpha(x) \leq_Y y \Longleftrightarrow x \leq_X \beta (y).$$ The $\alpha$ application is called the lower adjunct while the $\beta$ application is called the upper adjunct connection: 	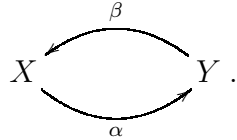
\begin{figure}[!htb]
	$$\xymatrix{X       \ar@/_0.6cm/[rr]_{\alpha}&&\ar@/_0.6cm/[ll]_{\beta}Y}.$$
	\caption{Galois monotonic connection between the posets $X$ and $Y$}\label{monoCG}
\end{figure}
\end{definition}

In the family of monotonic functions between partial orders there is a very important class of functions, called residuated functions.

\begin{definition}
	A function $f:X \to Y$ between posets $ X $ and $ Y $ is said to be residuated if it is monotonic, and in addition there is a monotonic function $ g: Y \to X $  such that \begin{equation}\label{eq3.3}
	g \circ f \geq \textrm{id}_X \quad \textrm{ and  }  \quad f \circ g \leq \textrm{id}_Y.
	\end{equation}
\end{definition}

If $f$ is a residuated function, the monotonic function $g$ that satisfies the inequalities in (\ref{eq3.3}) is called \textit{residue} of $f$ and is denoted by $f^R$. It is easy to see that the residue of a residuated function, when it exists, is unique (\cite{blyth}, p. 7).

Residuated functions, besides being important in themselves, play a fundamental role in the preservation of ideals.
	
\begin{theorem}[\cite{blyth}, Theorem 1.3] \label{preserva ideal}
	Let $f:X \to Y$ be an application between posets. The following statements are equivalents: \begin{enumerate}[label=\textbf{(\roman*)}]
		\item  $f$ is residuated;
		\item  \label{(ii)} For each principal ideal $\downarrow w$ of $Y$, the set $f^{-1}\left(\downarrow w\right)$ is a principal ideal of $X$.
	\end{enumerate}
\end{theorem}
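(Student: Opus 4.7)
The plan is to prove the two implications separately, with the residue (or its candidate) as the pivot in both directions.

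For \ref{(ii)} assuming \textbf{(i)}: starting from a residuated $f$ with residue $g := f^R$, I will first extract the hidden adjunction
$$f(x) \leq_Y w \;\Longleftrightarrow\; x \leq_X g(w) \qquad (x \in X,\; w \in Y).$$
The forward direction applies the monotonic $g$ to $f(x) \leq w$, obtaining $g(f(x)) \leq g(w)$, and then uses $g \circ f \geq \mathrm{id}_X$ to conclude $x \leq g(w)$. The backward direction is dual: apply the monotonic $f$ to $x \leq g(w)$, then use $f \circ g \leq \mathrm{id}_Y$. This equivalence immediately yields $f^{-1}(\downarrow w) = \downarrow g(w)$, a principal ideal of $X$.

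For the converse \textbf{(i)} assuming \ref{(ii)}: because each $f^{-1}(\downarrow w)$ is a principal ideal, it has a unique largest element; denote it $g(w)$. This defines $g: Y \to X$, and I claim that $g$ witnesses the residuation of $f$. First, $f$ must itself be shown monotonic: given $x_1 \leq x_2$, observe that $x_2 \in f^{-1}(\downarrow f(x_2))$ and, since this principal ideal is downward closed, $x_1$ belongs to it too, so $f(x_1) \leq f(x_2)$. Second, $g$ is monotonic: $w_1 \leq w_2$ gives $\downarrow w_1 \subseteq \downarrow w_2$, hence $\downarrow g(w_1) = f^{-1}(\downarrow w_1) \subseteq f^{-1}(\downarrow w_2) = \downarrow g(w_2)$, so $g(w_1) \leq g(w_2)$. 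Third, the two residuation inequalities: for every $x$, the trivial $f(x) \leq f(x)$ puts $x$ in $\downarrow g(f(x))$, giving $x \leq g(f(x))$; and $g(w) \in \downarrow g(w) = f^{-1}(\downarrow w)$ gives $f(g(w)) \leq w$.

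I expect no serious obstacle, as the whole argument is a clean unfolding of the definitions. The only subtle point is that in the direction \ref{(ii)} $\Rightarrow$ \textbf{(i)} the monotonicity of $f$ is not part of the hypothesis and must be extracted from downward closedness of principal ideals before $g$ can be shown to have the correct adjoint behaviour. Moreover, defining $g(w)$ as the top of $f^{-1}(\downarrow w)$ is forced: the equivalence derived in the first half of the proof already identifies this element as the only possible value of any residue at $w$, which also gives uniqueness of $f^R$ for free.
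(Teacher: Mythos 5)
Your proof is correct and complete in both directions: the adjunction $f(x)\leq w \Leftrightarrow x\leq g(w)$ gives $f^{-1}(\downarrow\! w)=\,\downarrow\! g(w)$, and conversely defining $g(w)$ as the generator of the principal ideal $f^{-1}(\downarrow\! w)$, together with your extraction of the monotonicity of $f$ from downward closedness, yields the two residuation inequalities. Note that the paper itself gives no proof of this theorem --- it is quoted from Blyth (Theorem 1.3) --- and your argument is essentially the standard one found there, so there is nothing to reconcile.
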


\begin{theorem}[\cite{Galatos}, Lemma 3.2]
	Let $X$ and $Y$ be posets. An application \linebreak $f:X \to Y$ is residuated if, and only if, the pair  $(f,f^R)$ forms a monotonic Galois connection.
\end{theorem}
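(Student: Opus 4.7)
The plan is to prove both directions of the equivalence directly from the definitions, exploiting the monotonicity assumptions already present in both notions.

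For the forward direction, assume $f:X\to Y$ is residuated with residue $g=f^R$, so $f$ and $g$ are monotonic and satisfy $g\circ f \geq \mathrm{id}_X$ and $f\circ g \leq \mathrm{id}_Y$ as in (\ref{eq3.3}). I would verify the Galois biconditional $f(x)\leq_Y y \iff x\leq_X g(y)$ by applying the appropriate adjunct to each side: if $f(x)\leq_Y y$, then monotonicity of $g$ yields $g(f(x))\leq_X g(y)$, and combining with $x\leq_X g(f(x))$ gives $x\leq_X g(y)$; conversely, if $x\leq_X g(y)$, then monotonicity of $f$ gives $f(x)\leq_Y f(g(y))$, and combining with $f(g(y))\leq_Y y$ gives $f(x)\leq_Y y$.

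For the reverse direction, assume $(f,g)$ is a monotonic Galois connection. Both maps are monotonic by hypothesis, so it suffices to recover the two inequalities $g\circ f\geq \mathrm{id}_X$ and $f\circ g\leq \mathrm{id}_Y$ and then identify $g$ as $f^R$. The trick here, and really the only content of the proof, is to feed reflexive inequalities into the Galois biconditional: for any $x\in X$, the trivial inequality $f(x)\leq_Y f(x)$ instantiated on the right-hand side of the biconditional delivers $x\leq_X g(f(x))$; symmetrically, for any $y\in Y$, the inequality $g(y)\leq_X g(y)$ instantiated on the left-hand side delivers $f(g(y))\leq_Y y$. Hence $g$ satisfies (\ref{eq3.3}) and $f$ is residuated with $f^R=g$ (uniqueness of the residue, noted after the definition, justifies the identification).

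There is no real obstacle; the proof is essentially a two-line manipulation in each direction. The only subtlety worth flagging is the uniqueness of the residue, which ensures that the $g$ produced in the reverse direction coincides with the map $f^R$ referred to in the statement, so that it is legitimate to write the Galois connection as the pair $(f,f^R)$ rather than $(f,g)$ for some auxiliary $g$.
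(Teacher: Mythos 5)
Your proof is correct: both directions are the standard manipulations (apply the adjoint and compose with the unit/counit inequalities for one direction, feed the reflexive inequalities $f(x)\leq f(x)$ and $g(y)\leq g(y)$ into the biconditional for the other), and you correctly flag the only delicate point, namely that uniqueness of the residue lets you identify the $g$ obtained from the Galois connection with $f^R$. The paper states this theorem without proof, citing Lemma 3.2 of Galatos et al., and your argument is exactly the standard proof of that result, so there is nothing to correct.
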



Therefore, whenever $ f $ is a residuated function, it is established that the pair $(f, f^R)$ satisfies the residuation principle, or adjunct principle, or that it forms a Galois connection. The next result can  be found in \cite{davey}, page 162, in topic 7.33. It is a characterization theorem for residuated functions.

\begin{theorem}[Characterization of residuated functions]\label{Caracteriza  Residuadas}
	Let $f:X \to Y$ and \linebreak $g:Y \to X$ be functions any between the posets $X$ and $Y$. The following statements are equivalents: \begin{enumerate} [label=\textbf{(\roman*)}]
		\item  $f$ is residuated and $g=f^R$; \label{R1}
		\item  For all $x \in X$ and $y \in Y$ one has $x \leq g(y) \Leftrightarrow f(x) \leq y$;\label{R2}
		\item  $f$  is monotonic and for each $y \in Y$, $g(y)=\max \{x\in X\,|\, f(x)\leq y\}$;\label{R3}
		\item  $g$  is monotonic and for each $x \in X$, $f(x)=\min \{y \in Y\,|\, x \leq g(y)\}$.\label{R4}
	\end{enumerate}
\end{theorem}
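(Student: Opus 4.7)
The plan is to establish the four conditions by a cyclic chain of implications $(\text{i})\Rightarrow(\text{ii})\Rightarrow(\text{iii})\Rightarrow(\text{i})$, and then obtain $(\text{ii})\Leftrightarrow(\text{iv})$ by the dual argument, swapping the roles of $f$ and $g$ and reversing the orders. Since the residue of a residuated function is unique, this will simultaneously prove that the $g$ appearing in each clause coincides with $f^R$.

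The implication $(\text{i})\Rightarrow(\text{ii})$ is essentially a restatement of the preceding theorem (the Galatos lemma): by hypothesis $f$ is residuated with residue $g=f^R$, so $(f,g)$ forms a monotonic Galois connection, which is exactly the biconditional in (ii). For $(\text{ii})\Rightarrow(\text{iii})$, I would first extract the two fundamental inequalities $x\leq g(f(x))$ and $f(g(y))\leq y$ by instantiating the biconditional with $y=f(x)$ and $x=g(y)$ respectively and using reflexivity on the trivial side. Monotonicity of $f$ then follows: if $x_1\leq x_2$ then $x_1\leq x_2\leq g(f(x_2))$, so (ii) yields $f(x_1)\leq f(x_2)$. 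The max characterization of $g(y)$ is then immediate: $f(g(y))\leq y$ shows $g(y)$ lies in $\{x\in X\,|\,f(x)\leq y\}$, while the biconditional in (ii) shows any member of this set is $\leq g(y)$.

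For $(\text{iii})\Rightarrow(\text{i})$, I would first verify monotonicity of $g$ directly from the max characterization: if $y_1\leq y_2$ then $\{x\,|\,f(x)\leq y_1\}\subseteq\{x\,|\,f(x)\leq y_2\}$, hence their maxima satisfy $g(y_1)\leq g(y_2)$. Next, $x\in\{x'\,|\,f(x')\leq f(x)\}$ trivially, so $x\leq g(f(x))$, giving $g\circ f\geq\mathrm{id}_X$; and since $g(y)$ belongs to the set whose max it is, $f(g(y))\leq y$, giving $f\circ g\leq\mathrm{id}_Y$. Together with the assumed monotonicity of $f$, this shows $f$ is residuated with residue $g$, so $g=f^R$ by uniqueness.

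The chain $(\text{ii})\Rightarrow(\text{iv})\Rightarrow(\text{i})$ proceeds by the same template read in the opposite direction: from (ii), monotonicity of $g$ comes from $y_1\leq y_2\leq f(g(y_2))$ giving $g(y_1)\leq g(y_2)$, and $f(x)$ is the minimum of $\{y\in Y\,|\,x\leq g(y)\}$ because $x\leq g(f(x))$ puts $f(x)$ in the set and (ii) forces any element of the set to dominate $f(x)$; the passage $(\text{iv})\Rightarrow(\text{i})$ mirrors $(\text{iii})\Rightarrow(\text{i})$ with min replacing max. No substantial obstacle is anticipated; the main point requiring care is that the paper's definition of ``residuated'' bundles in monotonicity of both $f$ and $g$, so in $(\text{iii})\Rightarrow(\text{i})$ and $(\text{iv})\Rightarrow(\text{i})$ one must remember to derive the monotonicity of the partner function (which is not in the hypothesis) from the extremal characterization before invoking the definition.
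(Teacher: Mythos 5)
The paper itself gives no proof of this theorem --- it is quoted from Davey--Priestley (item 7.33) --- so your self-contained argument is the relevant object, and in outline it is the standard textbook one: the cycle (i)$\Rightarrow$(ii)$\Rightarrow$(iii)$\Rightarrow$(i), the dual passage through (iv), the Galatos lemma supplying (i)$\Rightarrow$(ii), and uniqueness of the residue closing each loop. These steps are sound, and you are right to single out the need to recover monotonicity of the partner map in (iii)$\Rightarrow$(i) and (iv)$\Rightarrow$(i), since the paper's definition of ``residuated'' builds monotonicity of both maps into the notion.

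One step is misstated, though the repair is immediate: when deriving monotonicity of $g$ from (ii) you write $y_1\leq y_2\leq f(g(y_2))$, but under (ii) one only has $f(g(y))\leq y$ (instantiate with $x=g(y)$), so the inequality $y_2\leq f(g(y_2))$ is false in general. The correct dual of the argument you gave for $f$ is: $f(g(y_1))\leq y_1\leq y_2$, hence $f(g(y_1))\leq y_2$, and (ii) then gives $g(y_1)\leq g(y_2)$. With that line corrected, the proof is complete and would serve as a legitimate substitute for the citation.
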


Galois connections are used to describe classes of functions for modeling fuzzy logic connectors. In the next section, the notions of quasi-overlap on lattices, as well as their derived implications, will be investigated.

\section{Residuated implications derived from quasi-overlap on lattices}

	In this section, we present some results investigated on residuated implications induced by conjunctions that extend overlap functions to any lattice. Overlap functions were proposed by Bustince et al. \cite{BUSTINCE} in order to  solve the problem of fuzziness on the process of image classification. Initially, overlap functions were defined as continuous functions.  Bustince et al. in \cite{BUSTINCE} justify the requirement of continuity by saying that it is considered in order to avoid $O$ to be a uninorm, however it is easy to see that if a uninorm is an overlap function, then it is necessarily a t-norm. However,	in some contexts, continuity is not an indispensable property, especially when we consider finite lattices. This situation appears in some situations in the field of digital image processing. Considering this, in \cite{PAIVA} the authors proposed  a more general definition, called of \textit{quasi-overlap}, which arises from the removal of the continuity condition.

\begin{definition}[\cite{PAIVA}, Definition 3.2]\label{overlaps}
	Let $X$ be a bounded lattice. A function $O:X^2 \rightarrow X$ is called a quasi-overlap  function on $X$ (simply quasi-overlap, if the context is clear) if all of following properties hold:
	\begin{enumerate}[labelindent=\parindent, leftmargin=*,label=\textbf{(OL\arabic*)}]
		\item $O(x,y)=O(y,x)$ for all $x, y \in X$;\label{ol1}
		\item $O(x,y)=0$ if and only if $x=0$ or $y=0$;\label{ol2}
		\item $O(x,y)=1$ if and only if $x=y=1$;\label{ol3}
		\item $O$ is non-decreasing in each variable, that is \begin{eqnarray*}
			x_1\leq x_2 \Rightarrow O(x_1,y)\leq O(x_2,y)\\
			y_1\leq y_2 \Rightarrow O(x,y_1)\leq O(x,y_2). \label{ol4}
		\end{eqnarray*}
	\end{enumerate}\end{definition}
\begin{proposition}\label{salvou}
	A quasi-overlap $O$ is associative if and only if, for any $x,y,z \in X$, it satisfies the exchange principle: $O\left(x,O(y,z)\right)=O\left(y,O(x,z)\right)$.
\end{proposition}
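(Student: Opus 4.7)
The plan is to observe that this is an essentially formal equivalence whose only nontrivial ingredient is the commutativity axiom \textbf{(OL1)} from Definition~\ref{overlaps}; the remaining axioms \textbf{(OL2)}--\textbf{(OL4)} play no role. I would therefore structure the proof as two short chains of equalities, each applying commutativity and the hypothesis in alternation.

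For the forward implication, I would assume associativity and compute
\[
O(x,O(y,z)) = O(O(x,y),z) = O(O(y,x),z) = O(y,O(x,z)),
\]
where the first and last equalities are associativity and the middle one is \textbf{(OL1)}. For the reverse implication, I would assume the exchange principle and compute
\[
O(O(x,y),z) = O(z,O(x,y)) = O(x,O(z,y)) = O(x,O(y,z)),
\]
where the outer equalities use \textbf{(OL1)} applied to the outer pair and to the inner pair respectively, and the middle equality is the exchange principle applied with $z$ in the first slot.

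There is no real obstacle here: the argument is a standard book-keeping exercise and the statement is essentially well-known for any commutative binary operation. The only thing to be careful about is keeping track of which argument gets permuted by \textbf{(OL1)} at each step, so that the exchange identity is invoked in the exact form in which it was hypothesized. No appeal to the lattice structure, to boundary axioms, to monotonicity, or to Scott continuity is needed, so the proof should fit in just a few lines.
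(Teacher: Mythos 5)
Your proposal is correct and follows the same route as the paper, which simply remarks that both directions ``follow easily from the commutativity of $O$'' without writing out the computations; your two chains of equalities are exactly the omitted book-keeping. Both steps check out (in particular, the exchange principle is invoked in the reverse direction with $z$ in the first slot precisely as hypothesized), so nothing further is needed.
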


\begin{proof}
	The necessary and the sufficient conditions follows easily from the commutativity of $O$.
\end{proof}

In order to extend the notion of continuity presented in the Definition \ref{continuous} for the context of quasi-overlap functions, the following Definition is considered.

\begin{definition}[\cite{PAIVA}, Definition 3.1]
	Let $X$ be a complete lattice. An Overlap function on $X$ is a quasi-overlap that is Scott-continuous.
\end{definition}

In the following, the concept of residuated implications derived from quasi-overlap functions on lattices will be introduced.

\begin{definition}
	Let $X$ be a bounded lattice. A binary operation $I:X^2 \to X$ is called a implication if it is descending on the first variable, and nondecrea\-sing with respect to the second variable.  Moreover, $I(0,0)=I(0,1)=I(1,1)=1$ and $I(1,0)=0$.
\end{definition}

In the following, some properties that implications satisfy are presented.

\begin{definition}
	An implication $ I $ is said to fulfill: \begin{itemize}
		\item [\textbf{(NP)}] Neutral Property: $I(1,y)= y, \,\textrm{where }\, y \in X$;\label{PN}
		\item [\textbf{(EP)}] Exchange Principle: $I(x,I(y,z))=I(y,I(x,z))\,\textrm{where } \, x,y,z \in X$; \label{PT}
		\item [\textbf{(IP)}] Identity Principle:  $I(x,x)=1\,\textrm{where }\,x\in X$; \label{PI}
		\item [\textbf{(OP)}] Ordering Property: $x\leq y \Leftrightarrow I(x,y)=1\,\textrm{where }\,x,y \in X$. \label{OP}
	\end{itemize}
\end{definition}

\begin{lemma}\label{l5.1}
	Let $X$ be a complete lattice. Given a quasi-overlap $O:X^2 \to X$, the function defined by formula \begin{equation}\label{supL}
	I_O(x,y)=\sup \{t\in X\,|\, O(x,t)\leq y\}, \, \forall x,y \in X
	\end{equation} is nondecreasing with respect to second variable and decreasing with respect to first variable. Moreover, $I_O(0,0)=I_O(0,1)=I_O(1,1)=1$ and $I_O(1,0)=0$.
\end{lemma}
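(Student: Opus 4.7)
The proof is a direct unpacking of the definition in \eqref{supL}, using only the monotonicity condition \textbf{(OL4)} for the variable monotonicities and the boundary conditions \textbf{(OL2)}, \textbf{(OL3)} for the corner evaluations.

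First I would establish the two monotonicity claims by the standard trick of showing that the index sets behind the supremum are nested. For monotonicity in the second variable, fix $x \in X$ and take $y_1 \leq y_2$; then any $t$ with $O(x,t) \leq y_1$ automatically satisfies $O(x,t) \leq y_2$, so
\[
\{t \in X \mid O(x,t) \leq y_1\} \subseteq \{t \in X \mid O(x,t) \leq y_2\},
\]
and taking suprema gives $I_O(x,y_1) \leq I_O(x,y_2)$. For the first variable, fix $y$ and take $x_1 \leq x_2$; using \textbf{(OL4)} (monotonicity of $O$ in its first slot), $O(x_1,t) \leq O(x_2,t)$ for every $t$, hence $O(x_2,t) \leq y$ implies $O(x_1,t) \leq y$. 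The corresponding set inclusion now runs the other way, yielding $I_O(x_2,y) \leq I_O(x_1,y)$, which is the required antitonicity.

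For the corner values I would evaluate each of the four sets explicitly. By \textbf{(OL2)}, $O(0,t)=0$ for every $t \in X$, so the sets $\{t \mid O(0,t) \leq 0\}$ and $\{t \mid O(0,t) \leq 1\}$ are both equal to the whole lattice $X$, whose supremum is $1$; this handles $I_O(0,0)=I_O(0,1)=1$. Since $O(1,t) \leq 1$ trivially, $\{t \mid O(1,t) \leq 1\} = X$ as well, giving $I_O(1,1)=1$. Finally, $O(1,t) \leq 0$ forces $O(1,t)=0$, and \textbf{(OL2)} combined with $1 \neq 0$ forces $t=0$; thus $\{t \mid O(1,t) \leq 0\} = \{0\}$ and $I_O(1,0)=0$.

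There is no real obstacle here: the completeness of $X$ guarantees the sup in \eqref{supL} exists, and every step reduces to either an elementary set inclusion or a direct application of the defining properties of a quasi-overlap. The only point requiring a small amount of care is the corner $I_O(1,0)$, where one must notice that \textbf{(OL2)} is biconditional, so $O(1,t)=0$ genuinely characterises $t=0$ rather than merely being consistent with it.
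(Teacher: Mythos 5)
Your proposal is correct and follows essentially the same route as the paper: nested index sets behind the supremum for the two monotonicity claims (using \textbf{(OL4)}), and the boundary properties \textbf{(OL2)}, \textbf{(OL3)} for the four corner values, with completeness of $X$ guaranteeing the supremum exists. The only difference is cosmetic — you compute the four corner sets explicitly (e.g. $\{t \mid O(1,t)\leq 0\}=\{0\}$), whereas the paper simply states the resulting values, and the paper additionally records non-emptiness of $R(x,y)$ via $0\in R(x,y)$.
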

\begin{proof}
	In fact, first note that the function $I_O$ is well defined. Fix $x,y \in X$ and denote \begin{equation}\label{e5.2}
	R(x,y)\coloneqq \{t\in X\,|\, O(x,t)\leq y\}.
	\end{equation} Since $0 \leq O(x,0)=0$, it follows that $0 \in R(x,y)$. This means that $R(x,y) \neq \emptyset$ and once $X$ is complete, there is $\sup R(x,y)$ in $X$. Let $x,y,v \in X$,  with $y \leq v$. Then $$\{t\in X\,|\, O(x,t)\leq y\}\subseteq \{t\in X\,|\, O(x,t)\leq v\}$$ and therefore, $\sup \{t\in X\,|\, O(x,t)\leq y\}\leq \sup\{t\in X\,|\, O(x,t)\leq v\}$,that is, $I_O(x,y)\leq I_O(x,v)$. This means that the function $I_O$ is nondecreasing on the second variable. Now, let $x,u,y \in X$, com $x\leq u$. From the monotonicity of $ O $ with respect to the first variable one has that $O(x,t) \leq O(u,t)$, for all $t\in X$. Therefore, $$\{t\in X\,|\, O(u,t)\leq y\}\subseteq \{t\in X\,|\, O(x,t)\leq y\},$$ then $\sup\{t\in X\,|\, O(u,t)\leq y\}\leq \sup\{t\in X\,|\, O(x,t)\leq y\}$ and therefore, $I_O(u,y)\leq I_O(x,y)$. Thus $I_O$ is decreasing on the first variable. Moreover, since every quasi-overlap satisfies \ref{ol2} and \ref{ol3} from Definition \ref{overlaps}, one has that \begin{itemize}
		\item $I_O(0,0)=\sup \{t\in X\,|\, O(0,t)\leq 0\}=1$;
		\item $I_O(0,1)=\sup \{t\in X\,|\, O(0,t)\leq 1\}=1$;
		\item $I_O(1,1)=\sup \{t\in X\,|\, O(1,t)\leq 1\}=1$;
		\item $I_O(1,0)=\sup \{t\in X\,|\, O(1,t)\leq 0\}=0$.\end{itemize}
\end{proof}

\begin{definition}
	Let $O$ be a quasi-over a complete lattice $X$. The function $I_O$ defined by (\ref{supL}) is called the implication induced by $O$.
\end{definition}

In order to develop the notion of residuation for quasi-overlap and its induced implications, it is necessary to explore some important facts about a particular class of lattices, namely, the class of dense complete lattices. 
For this, we present below a sequence of useful results.

First we remember that given a subset $ B \subseteq X $ of a topological space $ X $, the interior of $B$, denoted by $\textrm{int}(B)$, is the largest open set contained in $ B $.

\begin{lemma}\label{i}
	Le $\langle X, \leq \rangle$ be a poset. If $B\subseteq X$, then $\textrm{int}(B)=\{x \in B \,|\, \uparrow \!x \subseteq B\}$.
\end{lemma}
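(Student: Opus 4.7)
The plan is to prove the set equality by double inclusion, relying on the fact that every open set in Scott's topology is upward closed by condition (i) of Definition \ref{aberto}.

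For the inclusion $\textrm{int}(B) \subseteq \{x \in B \,|\, \uparrow \!x \subseteq B\}$, I would take $x \in \textrm{int}(B)$. Since $\textrm{int}(B)$ is open and hence an upper set, we have $\uparrow \!x \subseteq \textrm{int}(B) \subseteq B$; together with $x \in B$, this places $x$ in the right-hand side.

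For the reverse inclusion I set $C := \{x \in B \,|\, \uparrow \!x \subseteq B\}$ and aim to show $C$ is open, so that the containment $C \subseteq B$ together with maximality of the interior yields $C \subseteq \textrm{int}(B)$. That $C$ is an upper set is immediate: if $x \in C$ and $y \geq x$, then $y \in \uparrow \!x \subseteq B$ and $\uparrow \!y \subseteq \uparrow \!x \subseteq B$, whence $y \in C$.

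The step I expect to require the most care is verifying condition (ii) of Definition \ref{aberto} for $C$: given a directed $D \subseteq X$ with $\sup D \in C$, one must exhibit some $d \in D$ already lying in $C$. In the finite-poset setting recalled in Remark \ref{rem-cont-fin-L}, this is automatic because a directed subset of a finite poset attains its supremum as a member, so $\sup D$ itself serves as the required witness. More generally, one would exploit the hypothesis $\uparrow \!\sup D \subseteq B$ together with the directedness of $D$, or read the statement in the ambient upper-set (Alexandrov) topology where condition (ii) is vacuous and $C$ is automatically open. Once openness of $C$ is in hand, the two inclusions combine to yield the desired equality.
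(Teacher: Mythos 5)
Your first inclusion is exactly the paper's: $\textrm{int}(B)$ is open, hence an upper set by condition (i) of Definition \ref{aberto}, so $\uparrow\!x\subseteq\textrm{int}(B)\subseteq B$ for every $x\in\textrm{int}(B)$. For the reverse inclusion the paper argues pointwise---for $w$ with $\uparrow\!w\subseteq B$ it asserts that $\uparrow\!w$ is open, hence $\uparrow\!w\subseteq\textrm{int}(B)$---whereas you try to show that the whole set $C=\{x\in B\mid\uparrow\!x\subseteq B\}$ is open. The two routes are equivalent, since both rest on the claim that these upper sets are open in the topology being used.

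The step you flag as delicate is a genuine gap, and it cannot be closed in Scott's topology: condition (ii) of Definition \ref{aberto} fails for $C$ (and for $\uparrow\!w$) in general. Take $X=[0,1]$ with the usual order and $B=[1/2,1]$. Then $C=[1/2,1]$, but $C$ is not Scott-open (the directed set $D=[0,1/2)$ has $\sup D=1/2\in C$ while $D\cap C=\emptyset$), and the Scott interior of $B$ is $(1/2,1]\neq C$; the same $D$ shows your suggested repair ``exploit $\uparrow\!\sup D\subseteq B$ and directedness of $D$'' cannot work, because no element of $D$ even lies in $B$. So the lemma as stated is false for the Scott topology; it is correct precisely under the Alexandrov (upper-set) reading you mention, where condition (ii) is vacuous, or under extra hypotheses making principal filters Scott-open (e.g.\ $X$ finite, as in Remark \ref{rem-cont-fin-L}). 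Note that the paper's own proof conceals the same difficulty: the assertion that $\uparrow\!w$ is open is unjustified and false for the Scott topology in the example above (and the same unsupported claim reappears later, e.g.\ in the proofs of Theorem \ref{densety} and Lemma \ref{tt}). In short, read in the Alexandrov topology your argument is complete and essentially the paper's; read in the Scott topology the missing step is not yours to supply---it is a defect of the statement itself.
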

\begin{proof}
	Let $w \in \{x \in B \,|\, \uparrow \!x \subseteq B\}$. Then $\uparrow\!w \subseteq B$. But $\uparrow\!w$ is open, hence $\uparrow\!w \subseteq \textrm{int}(B)$. Thus, $w \in\textrm{int}(B)$. On the other hand, let $y \in  \textrm{int}(B) \subseteq B$. Since $\textrm{int}(B)=\uparrow\!\textrm{int}(B)$, one has that $\textrm{int}(B)\subseteq \uparrow \!\!B$. Then, $y \in \{x \in B \,|\, \uparrow \!x \subseteq B\}$.
\end{proof}

We recall also that a subset $S$ of a topological space $X$ is \textit{dense} in $ X $ when its closure $ \overline{S}$ coincides with the whole space $ X $. This is equivalent to say that every open non-empty in $ X $ contains some point of $ S $, or else, that the complement of $ S $ does not have interior points. In order to establish a connection between this topological definition for density and that given by the Definition \ref{order density} and which is given in terms of order, we present the following assertion.

\begin{theorem}\label{densety}
	Let $\langle X, \leq \rangle$ be a DCPO. A set $S \subseteq X$ is order dense in $X$ if, and only if, $S$ is dense in $X$ in the Scott's topology.
\end{theorem}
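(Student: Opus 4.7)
My overall strategy is to reformulate Scott density as ``every non-empty Scott open set meets $S$'' and then play the combinatorial content of order density against the upward closure of Scott opens. For the forward direction ($\Rightarrow$), suppose $S$ is order dense and let $U$ be a non-empty Scott open. Pick $x\in U$. The hypothesis forces $X$ itself to be order dense (since the interpolating $z$ must live in $X$), so either $X$ is a singleton, in which case the claim is vacuous, or $x$ admits some strict upper bound $y\in X$. Upward closure of $U$ (condition (i) of Definition~\ref{aberto}) puts $y\in U$, and order density of $S$ applied to $x<y$ produces $z\in S$ with $x<z<y$; upward closure again places $z\in U$, so $z\in S\cap U$. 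The delicate sub-case is when $x$ is maximal in $X$ and $U=\{x\}$: I would rule this out by iterating order density to build an ascending chain strictly below $x$ whose directed supremum equals $x$, contradicting condition (ii) of Definition~\ref{aberto}.

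For the reverse direction ($\Leftarrow$), suppose $S$ is Scott dense and take $x<y$ in $X$. The principal ideal $\downarrow x$ is downward closed, and the supremum of any directed set contained in $\downarrow x$ is bounded by $x$ and so also lies in $\downarrow x$; hence $\downarrow x$ is Scott closed, and $U:=X\setminus\downarrow x$ is a non-empty Scott open (it contains $y$). Scott density produces $s\in S\cap U$, giving $s\not\leq x$. The remaining task is to refine $s$ so that $s<y$, and this is where the main obstacle lies. My tactic is to shrink $U$ by intersecting with $\textrm{int}(\downarrow y)$, which by Lemma~\ref{i} equals $\{w\in\downarrow y:\uparrow w\subseteq\downarrow y\}$. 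One then checks that this intersection remains non-empty and Scott open (using order density of $X$ to produce $y'$ with $x<y'<y$ sitting in the interior), and reapplies Scott density.

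The principal difficulty is that the Scott topology is inherently one-sided: a Scott open encodes a lower cutoff at $x$ but offers no mechanism to impose an upper cutoff at $y$. My resolution is to import the missing upper bound by hand via Lemma~\ref{i} together with the order density of $X$; once the witness $s$ is trapped in $(X\setminus\downarrow x)\cap\textrm{int}(\downarrow y)$, the strict inequalities $x<s<y$ follow from antisymmetry and the definition of the interior, completing the proof.
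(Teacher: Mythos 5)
In the forward direction your reduction ``either $X$ is a singleton or $x$ admits a strict upper bound'' is not valid: order density of $S$ says nothing about points with no strict upper bound, and a non-singleton $X$ can have many maximal (indeed order-isolated) elements. The delicate case is therefore not only $U=\{x\}$ with $x$ maximal, but any open all of whose members are maximal, and your proposed repair does not go through: iterating Definition~\ref{order density} below a maximal $x$ produces a chain whose supremum you cannot steer to $x$ (density only provides \emph{some} point strictly in between, so the chain may converge to a strictly smaller element, and condition (ii) of Definition~\ref{aberto} is not violated), while if $x$ has no strict lower bound there is no chain to build at all --- in that situation $\uparrow\!x=\{x\}$ really is Scott open and disjoint from $S$, so no argument can dispose of this case; it has to be excluded by hypothesis. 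Note that the paper's own proof of $(\Rightarrow)$ never meets this difficulty because it only tests opens of the form $\uparrow\!x$ attached to a comparable pair $x<y$, rather than an arbitrary nonempty open.

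In the converse direction your first step ($\downarrow\!x$ is Scott closed, hence $X\setminus\downarrow\!x$ is a nonempty open meeting $S$) is essentially the paper's set $P_x$, and it is fine; but it only yields $s\not\leq x$, which is weaker than $x<s$, and antisymmetry cannot upgrade incomparability to a strict inequality. The refinement meant to secure $s<y$ fails outright: since every nonempty Scott open is an upper set, $\textrm{int}(\downarrow\!y)=\emptyset$ unless $\uparrow\!y\subseteq\downarrow\!y$; already in $X=[0,1]$ one has $\textrm{int}(\downarrow\!y)=\emptyset$ for every $y<1$, and, reading it through Lemma~\ref{i}, $w\in\textrm{int}(\downarrow\!y)$ forces \emph{every} element above $w$ --- in particular the top of a bounded lattice, or any maximal element above $w$ --- to lie below $y$. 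Hence there is no $y'$ with $x<y'<y$ inside that interior, the set $(X\setminus\downarrow\!x)\cap\textrm{int}(\downarrow\!y)$ is empty, and no witness can be trapped there. The one-sidedness you correctly diagnose cannot be compensated within the Scott topology by taking interiors of principal ideals; an order-theoretic argument producing both strict inequalities is needed (the paper's own $(\Leftarrow)$, which stops at the analogue of your $s\not\leq x$ and then asserts $x<z<y$, leaves the same gap, but your extra step does not repair it).
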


\begin{proof}
$(\Rightarrow)$ Suppose that the set $S \subseteq X$ is order dense in $X$ and let $x,y \in X$ be such that $x<y$. Then, the set $D=\{z \in S\,|\, x < z < y\}$ is non-empty and directed, with $\sup D = y$. On the other hand, $\uparrow \!x=\{r\in X \,|\, x\leq r\}$ is an open set of $X$ in Scott's topology, since it satisfies the properties $(i)$ and $(ii)$ of Definition \ref{aberto}. Therefore, since $\uparrow \!x \cap D\neq \emptyset$ and $D\subseteq S$, it follows that $\uparrow \!x \cap S \neq \emptyset$. Therefore, $S$ is dense in $X$  in the Scott's topology.\\
	
In that follows, if neither $x \leq y$ nor $y \leq x$, then $x$ and $y$ are said to be incomparable, which is denoted here as $x\parallel y$.\\
	
	\noindent
$(\Leftarrow)$  For each $x \in X$ which satisfies the condition: $x < y$ or $x\parallel y$, for some $y\in X$, define the set $P_x = \{w \in X\,|\, x < w\textrm{ or } x\parallel y\}$. It is clear that $P_x \neq \emptyset$. Let's show that $P_x$ is a Scott's open. In fact, the condition ($i$) of Definition \ref{aberto} is trivially satisfied. As for condition ($ii$), suppose there is a directed set $D\subseteq X$ such that $\sup D \in P_x$ and $D\cap P_x = \emptyset$. So, for all $w \in D$ one has that $w \leq x$. Therefore, $x$ is an upper bound for $w$ and $\sup D \leq x$, which contradicts the fact that $\sup D \in P_x$. Therefore, $D\cap P_x \neq \emptyset$ and hence $P_x$ is a Scott's open. Moreover, since $S$ is topologically dense in $X$, one has that $P_x \cap S \neq \emptyset$ and hence exists $z \in S$ such that $x<z<y$ for all $x,y \in X$ that satisfy the condition $x<y$.
\end{proof}

\begin{proposition}	\label{filtro compacto}
	Let  $\langle X, \leq \rangle$ be a poset provided with Scott's topology $\sigma(X)$.  For all $y \in X$, the set $\uparrow \!y$ is compact. 
\end{proposition}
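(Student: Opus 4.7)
The plan is to exploit the very first axiom of Scott open sets, namely that every Scott open set is an upper set with respect to $\leq$. This single property collapses the compactness argument to a triviality: once an open cover reaches the base point $y$, it must already contain the whole principal filter $\uparrow y$.

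More precisely, I would start by fixing an arbitrary open cover $\{U_\lambda\}_{\lambda \in \Lambda} \subseteq \sigma(X)$ of $\uparrow y$ (working with the induced subspace topology on $\uparrow y$ amounts, as usual, to covering by ambient Scott opens). Since $y \in \uparrow y$, there exists some index $\lambda_0 \in \Lambda$ with $y \in U_{\lambda_0}$. Then I invoke clause (i) of Definition \ref{aberto}: for any $z \in X$ with $y \leq z$, the relation $y \in U_{\lambda_0}$ forces $z \in U_{\lambda_0}$. In other words, $\uparrow y \subseteq U_{\lambda_0}$, so $\{U_{\lambda_0}\}$ is already a finite subcover of $\uparrow y$.

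I do not foresee any genuine obstacle here; the only conceptual subtlety is remembering that compactness of a subset can equivalently be tested against covers by ambient open sets, and that in the Scott topology upward-closedness of opens is precisely what makes principal filters behave like a single point from the cover-theoretic perspective. As a side remark, the same argument shows that every $\uparrow y$ is in fact even \emph{supercompact} (every open cover admits a one-element subcover), which is a sharper property than plain compactness.
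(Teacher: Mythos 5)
Your proof is correct and takes essentially the same approach as the paper: select one member of the cover containing $y$ and invoke the upward-closedness of Scott open sets (clause (i) of Definition \ref{aberto}) to conclude that $\uparrow \! y$ is contained in that single open set, yielding a one-element subcover. If anything, your write-up is slightly cleaner, since you cover $\uparrow \! y$ directly (testing compactness against ambient opens) rather than starting from a cover of all of $X$ as the paper does, and your remark on supercompactness is a correct sharpening.
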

\begin{proof}
In fact, let $\mathcal{A}=\{A_{\lambda}\}_{\lambda \in I}$ be an open covering of $X$. Then, for all $y \in X$, $y \in \bigcup_{\lambda \in I} A_{\lambda}$. Hence, $y \in A_{\lambda_{0}}$, for some $\lambda_{0} \in I$. But $ A_{\lambda_{0}}$ is Scott's open, then $ A_{\lambda_{0}}=\uparrow \!A_{\lambda_{0}}$. Also, as $\uparrow \!y \subseteq \uparrow \!A_{\lambda_{0}}$, it follows that  $\uparrow \!A_{\lambda_{0}}$ is a finite subcolection of  $\mathcal{A}$. Therefore, since $\uparrow \!y \subseteq \uparrow \!A_{\lambda_{0}}$, it follows that $\uparrow \!y$ is compact.
\end{proof}

In Scott's topology, for complete lattices, compactness is a trivial feature.

\begin{proposition}\label{com}
	If $\langle X,  \sigma(X), \leq \rangle$ is a complete lattice equipped with Scott topo\-logy, then $X$ is compact.
\end{proposition}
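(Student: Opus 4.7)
The plan is to exploit the fact that in a complete lattice there is a bottom element $\bot$, and then either invoke Proposition \ref{filtro compacto} directly or redo its short argument in this special case.

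First I would observe that since $X$ is a complete lattice, it has a least element $\bot = \inf X$. Consequently $\uparrow\!\bot = X$, because $\bot \leq x$ for every $x\in X$. This is the key structural remark; the rest is essentially bookkeeping.

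Next I would appeal to Proposition \ref{filtro compacto}, which guarantees that $\uparrow\!y$ is compact in $\sigma(X)$ for every $y \in X$. Taking $y = \bot$ gives immediately that $X = \uparrow\!\bot$ is compact. Alternatively, one can redo the argument inline: given any Scott-open cover $\mathcal{A} = \{A_\lambda\}_{\lambda \in I}$ of $X$, there exists $\lambda_0 \in I$ with $\bot \in A_{\lambda_0}$; since $A_{\lambda_0}$ is Scott-open, condition (i) of Definition \ref{aberto} forces $A_{\lambda_0} = \uparrow\!A_{\lambda_0}$, and in particular $\uparrow\!\bot \subseteq A_{\lambda_0}$. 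But $\uparrow\!\bot = X$, so $\{A_{\lambda_0}\}$ is already a finite (indeed singleton) subcover of $\mathcal{A}$.

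There is no real obstacle here; the statement is essentially a corollary of the previous proposition together with the existence of $\bot$ in a complete lattice. The only thing worth being careful about is not to conflate upward-closedness (which Scott-openness provides) with the full Scott-open condition; but since we only need the upward-closedness to propagate from $\bot$ to all of $X$, the argument uses nothing more. Hence $X$ is compact in the Scott topology.
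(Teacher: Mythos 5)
Your proof is correct and follows essentially the same route as the paper: the bottom element $0$ of the complete lattice lies in some member of the cover, and since Scott-open sets are upward closed, that single member already contains $\uparrow\!0 = X$. Your observation that one may instead cite Proposition \ref{filtro compacto} with $y=\bot$ is just a repackaging of the same argument (and is, if anything, stated more cleanly than the paper's own wording).
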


\begin{proof}
let $\mathcal{A}=\{A_{\lambda}\}_{\lambda \in I}$ be an open covering of $X$. Since $X$ is complete  it follows that it is bounded. Let $0$ be its bottom element. Then, $0 \in \bigcup_{\lambda \in I} A_{\lambda}$. Hence, $0 \in A_{\lambda_{0}}$, for some $\lambda_{0} \in I$. But $ A_{\lambda_{0}}$ is Scott's open, then $ A_{\lambda_{0}}=\uparrow \!A_{\lambda_{0}}$. Also, as $\uparrow \!0 \subseteq \uparrow \!A_{\lambda_{0}}$, it follows that  $\uparrow \!A_{\lambda_{0}}$ is a finite subcolection of  $\mathcal{A}$. Also, since $\uparrow \!0 = \{x\in X\,|\, 0\leq x\}$, it follows that $X \subseteq \uparrow \!0$. Therefore, $X$ is compact.
\end{proof}

Therefore, follows from Theorem \ref{densety} and Proposition \ref{com},  which every complete dense order lattice is compact and dense in Scott's topology.

Another point that deserves attention is the fact that these spaces must not necessarily have a definite total order about them. Consider the set of subintervals of $[0,1]$ defined as $\mathbb{I}([0,1])= \left\{[a,b]\,|\, a\leq b \textrm{ and } a,b \in [0,1] \right\}$, provided with the product order ``$ \preceq $'' defined as follows: $$[u,v]\preceq [p,q] \textrm{ if, and only if, } u \leq p \textrm{ and } v \leq q$$ where ``$\leq$'' is the usual order of $\mathbb{R}$. The next lemma shows that $\langle \mathbb{I}([0,1]) , \preceq \rangle$ is a partially ordered complete lattice.

\begin{lemma}
	The structure $\left\langle \mathbb{I}([0,1]), \sqcap, \sqcup, \mathbf{0}, \mathbf{1} \right\rangle$, where $[a,b] \sqcap [c,d] = [a\wedge c, b \wedge d]$ and $[a,b] \sqcup [c,d] = [a\vee c, b \vee d]$, for all $a,b,c,d \in [0,1]$, it is a complete lattice with the top element $\mathbf{1}=[1,1]$ and the bottom element $\mathbf{0}=[0,0]$.
\end{lemma}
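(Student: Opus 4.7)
The plan is to verify in turn that (a) the operations $\sqcap$ and $\sqcup$ are well-defined binary operations on $\mathbb{I}([0,1])$, (b) they coincide with the meet and join for the product order $\preceq$, (c) $\mathbf{0}$ and $\mathbf{1}$ are the bottom and top elements, and (d) arbitrary non-empty families have infima and suprema. The entire argument reduces to elementary computations performed componentwise in $[0,1]$, using that $\langle [0,1], \leq \rangle$ is itself a complete lattice.

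For well-definedness, given $[a,b], [c,d] \in \mathbb{I}([0,1])$, the chain $a \wedge c \leq a \leq b$ together with $a \wedge c \leq c \leq d$ forces $a \wedge c \leq b \wedge d$, so $[a \wedge c, b \wedge d]$ is a legitimate closed subinterval of $[0,1]$; the dual chain handles $\sqcup$. For the lattice property, $[a \wedge c, b \wedge d] \preceq [a,b]$ and $[a \wedge c, b \wedge d] \preceq [c,d]$ hold directly from the definition of $\wedge$. If $[u,v]$ is any lower bound of $\{[a,b],[c,d]\}$, then $u \leq a$, $u \leq c$, $v \leq b$, $v \leq d$, hence $u \leq a \wedge c$ and $v \leq b \wedge d$, giving $[u,v] \preceq [a,b] \sqcap [c,d]$; so $\sqcap$ produces the infimum, and the symmetric argument shows $\sqcup$ produces the supremum. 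The identifications $\mathbf{0} \preceq [a,b] \preceq \mathbf{1}$ for every $[a,b] \in \mathbb{I}([0,1])$ are immediate from $0 \leq a \leq 1$ and $0 \leq b \leq 1$.

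For completeness, let $\{[a_\lambda, b_\lambda]\}_{\lambda \in \Lambda}$ be a non-empty family in $\mathbb{I}([0,1])$. Since $[0,1]$ is complete, all of $\inf_\lambda a_\lambda$, $\inf_\lambda b_\lambda$, $\sup_\lambda a_\lambda$, $\sup_\lambda b_\lambda$ exist in $[0,1]$. As $\inf_\lambda a_\lambda \leq a_\mu \leq b_\mu$ for every $\mu \in \Lambda$, we get $\inf_\lambda a_\lambda \leq \inf_\mu b_\mu$, so $[\inf_\lambda a_\lambda,\, \inf_\lambda b_\lambda]$ is a valid element of $\mathbb{I}([0,1])$, and the same lower-bound argument as in the binary case shows it is the infimum of the family with respect to $\preceq$. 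Dually, $[\sup_\lambda a_\lambda,\, \sup_\lambda b_\lambda]$ is the supremum. I anticipate no real obstacle: the componentwise nature of the product order turns every step into a direct transfer from $[0,1]$, and the single subtlety worth flagging is the short but necessary check that componentwise min/max of endpoints still respects the interval constraint $a \leq b$, which is precisely the observation made at the beginning.
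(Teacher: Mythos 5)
Your proof is correct, but the completeness part runs along a different track than the paper's. You construct the infimum and supremum of an arbitrary non-empty family $\{[a_\lambda,b_\lambda]\}$ directly and componentwise from the family itself, namely $\inf = [\inf_\lambda a_\lambda,\ \inf_\lambda b_\lambda]$ and $\sup = [\sup_\lambda a_\lambda,\ \sup_\lambda b_\lambda]$, after checking that the interval constraint is preserved. The paper instead passes through the sets of bounds: it forms the set $X^{\ell}$ of all lower bounds of the family (non-empty because $\mathbf{0}$ is a lower bound), takes componentwise suprema \emph{over $X^{\ell}$} to produce a candidate $[v,w]$, and argues that this is the greatest lower bound; the supremum is handled dually via the set $X^{u}$ of upper bounds. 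Both arguments rest on the completeness of $\langle[0,1],\leq\rangle$, but yours is the more direct and economical one: it yields an explicit formula for $\inf$ and $\sup$ in terms of the given family, and it makes explicit two small verifications the paper leaves tacit, namely that $\sqcap$ and $\sqcup$ are well defined (the endpoint inequality $a\wedge c\leq b\wedge d$ and its dual) and that the candidate infimum $[\inf_\lambda a_\lambda, \inf_\lambda b_\lambda]$ is itself a legitimate interval. The paper's route through the bound sets buys nothing extra here beyond echoing the general fact that an infimum is the supremum of the set of lower bounds, so your version is, if anything, cleaner.
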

 \begin{proof}        
	According to the definition of $\sqcap$ and $\sqcup$ operators, just consider for each $[a,b],[c,d] \in \mathbb{I}([0,1])$, 
	$\inf \{[a,b], [c,d]\}=[a,b]\sqcap [c,d]$ and $\sup \{[a,b], [c,d]\}=[a,b]\sqcup [c,d]$. Thus 
	$\langle \mathbb{I}([0,1]), \sqcap, \sqcup,\mathbf{0}, \mathbf{1} \rangle$ is a lattice. Now consider the non-empty set  $X \subseteq \mathbb{I}([0,1])$. It is obvious that $[0,0]$ 
	is a lower bound of $ X $, then the set: $$X^{\ell}=\{J \in \mathbb{I}([0,1])\,|\, J \textrm{ is lower bound of } X\}$$ it is not empty. Define 
	$$v=\sup\limits_{[p,q]\in X^{\ell}} (p) \quad \textrm{ and } \quad w=\sup\limits_{[p,q]\in X^{\ell}}(q).$$ 
	This implies that $[v,w]$ is lower bound of $X$. We affirm that  $[v,w]$ it is the largest of the lower bounds of $ X $. Indeed, suppose there 
	exists $[r,s] \in X^{\ell}$ such that $[v,w]\sqsubseteq [r,s]$, then $v\leq r \, \textrm{ and }\, w\leq s$. On the other hand, by way $v$ and $w$ 
	are defined, we have $v \geq r$ and $w\geq s$. Therefore $v=r$, $w=s$ and hence $\inf X = [v,w]$. Similarly, since $[1,1]$ is upper bound 
	for $X$, define $$X^{u}=\{T\in \mathbb{I}([0,1])\,|\, T \textrm{ is upper bound of } X\}\neq \emptyset.$$ and call 
	$$m=\inf\limits_{[a,b]\in X^{u}}(a) \quad \textrm{ and } \quad n=\inf\limits_{[a,b]\in X^{u}} (b).$$ 
	This leads us to conclude that $\sup X = [m,n]$.     
\end{proof}

 In addition, since $ [0,1] $ is dense, it follows that $ \mathbb{I}([0,1]) $ is also dense. On the other hand, the interval $ [0,1] $ provided with the usual order of reais is also complete lattice of dense order. Finally, a final point to be discussed on this topic is linked to the issue of convergence, which is clarified in the next lemma.

\begin{lemma}\label{tt}
	If $X$ is a complete lattice of dense order, then $X$ is Hausdorff.
\end{lemma}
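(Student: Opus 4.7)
The cleanest route is via Proposition~\ref{2.18}: it suffices to show that no net in $X$ has two distinct Scott-limits. Suppose, for contradiction, that a net $(x_j)_{j\in J}$ converges to both $s$ and $t$ with $s\neq t$, and split into two cases according to whether $s$ and $t$ are comparable or incomparable.

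In the comparable case (WLOG $s<t$), apply Definition~\ref{order density} to obtain $z\in X$ with $s<z<t$. Since $\uparrow z$ is Scott-open (a fact already used in the proof of Theorem~\ref{densety}) and contains $t$, convergence of $(x_j)$ to $t$ forces $x_j\geq z$ eventually. One then wants to produce a Scott-open neighborhood $U$ of $s$ disjoint from $\uparrow z$; convergence to $s$ would force $x_j\in U$ eventually, contradicting $x_j\geq z$. Such a $U$ must be extracted from the density of the order, for instance by iterating Definition~\ref{order density} to build a strictly increasing chain $s<z_1<z_2<\dots<z$ and assembling $U$ from the upper sets $\uparrow z_i$ via the upper/lower limit machinery of Definition~\ref{def_lim1}. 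In the incomparable case $s\parallel t$, I would apply density to the strict inequalities $s\wedge t<s$ and $s\wedge t<t$ to produce $u,v$ with $s\wedge t<u\leq s$ and $s\wedge t<v\leq t$, and then carve out disjoint Scott-open neighborhoods from the principal filters $\uparrow u$ and $\uparrow v$, choosing the intermediate elements so that their upward closures do not meet.

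\textbf{Main obstacle.} The chief difficulty is the comparable case: because Scott-open sets are upward closed, any Scott-open containing $s$ a priori contains every element above $s$, including $t$, so producing a Scott-open neighborhood of $s$ disjoint from $\uparrow z$ (with $s<z$) is delicate and is precisely where order density must do the heavy lifting. I expect the decisive step to be an argument through $\underline{\lim}_{i\in J} x_i$ and $\overline{\lim}_{i\in J} x_i$, showing that in a dense complete lattice these upper and lower order-theoretic limits must coincide for any doubly-convergent net, thereby forcing uniqueness of the limit and, via Proposition~\ref{2.18}, the Hausdorff property.
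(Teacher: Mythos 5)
Your proposal stops short of an actual proof at exactly the point you yourself flag as the main obstacle. You write that one ``wants to produce a Scott-open neighborhood $U$ of $s$ disjoint from $\uparrow\!z$'' and that such a $U$ ``must be extracted from the density of the order,'' but you never construct it; the coincidence of $\underline{\lim}$ and $\overline{\lim}$ that you ``expect'' to be the decisive step is likewise only announced. Moreover, the route you sketch cannot be completed as described: since Scott-open sets are upward closed, any Scott-open set containing $s$ contains all of $\uparrow\!s\supseteq\uparrow\!z\ni t$, so no Scott-open neighborhood of $s$ is disjoint from $\uparrow\!z$; and the sets $\uparrow\!z_i$ (with $z_i>s$) from your iterated chain do not even contain $s$, so no union of them is a neighborhood of $s$. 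The incomparable case has the same defect: $\uparrow\!u$ and $\uparrow\!v$ both contain $u\vee v$, hence are never disjoint. In short, any separation by upward-closed sets is impossible (any two nonempty upper sets of a complete lattice share the top element $1$), so the gap is not a missing detail but a step that fails.

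The paper's proof goes a genuinely different way and avoids this obstruction: it does not use nets or Proposition \ref{2.18} at all. Given distinct $x,y$, it uses order density (Definition \ref{order density}) to choose intermediate elements and then separates the points by sets that are \emph{not} upward closed, namely $B=\{w\in X\,|\,x\leq w<x_0\}$ and $C=\{u\in X\,|\,y\leq u<y_0\}$ in the incomparable case, and $M=\{p\in X\,|\,x\leq p<z\}$ together with $\uparrow\!z$ in the comparable case, invoking Lemma \ref{i} to argue openness of these bounded-above sets. If you wish to salvage your net-based strategy via Proposition \ref{2.18}, you would first have to establish the openness of such non-upper sets anyway (i.e., redo the content of the paper's case analysis), at which point the detour through limits of nets buys nothing; the direct construction of the separating neighborhoods is the substance of the lemma.
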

\begin{proof}
	Let $x,y \in X$, so we have the following possibilities: \begin{enumerate}[labelindent=\parindent, leftmargin=*,label=\textbf{(\roman*)}]
		\item $x$ and $y$ incomparable: In this case, $x,y < \sup\{x,y\}$. Then, of the density of $X$,  there are $x_0,y_0\in X$ such that $x< x_0 < \sup\{x,y\}$ and $y< y_0 < \sup\{x,y\}$. Let's show that the sets $B=\{w\in X\,|\, x\leq w < x_0\}$ and $C=\{u\in X\,|\, y\leq u < y_0\}$ are open non-empty of $X$  in Scott's topology and that $B\cap C=\emptyset$. Indeed, first note that by definition $\textrm{int}(B)\subseteq B$ always worth it. On the other hand, if $b\in B\subseteq X$, then $x\leq b < x_0$ and more, the set $\uparrow\!x \cap B= \{r\in B\,|\, x\leq r\}$ is an open of $B$ in Scott's topology induced of $X$. Therefore, by Lemma \ref{i}, $b\in \textrm{int}(B)=\{r\in B\,|\, \uparrow\!r \subseteq B\}$. That is, $B$ is an open of $X$ that contains $x$. Similarly it is shown that  $C=\textrm{int}(C)$. That is, $C$ is a open of $X$ that contains $y$. Moreover, if $B\cap C \neq \emptyset$ so there is $\alpha \in X$ such that $x\leq \alpha$ and $y\leq \alpha$, which leads to a contradiction with $\sup\{x,y\}$.
		\item $x$ and $y$ are comparables: In this case, assume without loss of generality  $x< y$. From the density of $ X $ there is $z\in X$ such that $x< z< y$. Define the sets $M=\{p\in X\,|\, x\leq p< z\}$ and $\uparrow \! z=\{q\in X\,|\, z\leq q\}$. Notice that $M$ is a open of $X$ that contains $x$ and $\uparrow \!z$ is a open of $X$ that contains $y$. Moreover, $M \cap \uparrow \!z = \emptyset$.
	\end{enumerate} Therefore, it follows that the lattice $ X $ is a Hausdorff space.

\end{proof}
\begin{remark}
	The Lemma \ref{tt} together with Proposition \ref{2.18} guarantee the uniqueness of convergence of converging nets in a complete lattice of dense order.
\end{remark}

Let $ X $ be the complete lattice of dense order. For each $ x \in X $, define the functions $O_x, I_{O_x}:X \to X$ por $O_x(z)=O(x,z)$ and $I_{O_x}(y)=I_O(x,y)$, for all $y,z \in X$, where $O$ is a quasi-overlap and $I_O$ its induced implication. In what follows, it will be presented under what conditions $O_x$ and $I_{O_x}$ represent a family of residuated functions and their respective family of residues.

\begin{definition}
	The pair $(O,I_O)$ is said to satisfy the residuation principle whenever \begin{equation}
	O(x,z)\leq y \Longleftrightarrow z\leq I_O(x,y), \, \, \forall x,y,z \in X.
	\end{equation} 
\end{definition} 

The next theorem reveals that the class of quasi-overlap functions that fulfill the residuation principle is the class of continuous functions according to Scott's topology.

\begin{theorem}\label{resigalois}
	Let $X$ be a complete lattice of dense order and $O$ a quasi-overlap over $X$. So the following items are equivalent:\begin{enumerate}[labelindent=\parindent, leftmargin=*,label=\textbf{(\roman*)}]
		\item $O$ is Scott-continuous;
		\item $O$ and $I_O$ satisfy the residuation principle;
		\item $I_O(x,y) = \max\{t\in X\,|\, O(x,t)\leq y\}$. \label{(iii)}
	\end{enumerate}
\end{theorem}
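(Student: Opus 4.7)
The plan is to prove the three conditions are equivalent via a cyclic chain (ii) $\Rightarrow$ (iii) $\Rightarrow$ (i) $\Rightarrow$ (ii). The first two steps follow almost directly from Lemma \ref{l5.1} and Theorem \ref{Caracteriza Residuadas}; the third step is the analytic heart of the theorem and is where the density hypothesis on $X$ enters decisively.

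For (ii) $\Rightarrow$ (iii), I would substitute $z = I_O(x,y)$ into the residuation biconditional and use the trivial $I_O(x,y) \leq I_O(x,y)$ to obtain $O(x, I_O(x,y)) \leq y$; this shows that the supremum defining $I_O(x,y)$ in Equation (\ref{supL}) is in fact attained in $R(x,y) := \{t : O(x,t) \leq y\}$, hence $I_O(x,y) = \max R(x,y)$. For (iii) $\Rightarrow$ (i), item (iii) combined with the monotonicity of $O_x$ inherited from (OL4) matches exactly the characterization of residuated maps in Theorem \ref{Caracteriza Residuadas}(\ref{R3}); hence $O_x : X \to X$ is residuated with residue $I_O(x,\cdot)$, and as the lower adjoint of a monotonic Galois connection it preserves every existing non-empty supremum, in particular suprema of directed subsets. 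So $O_x$ is Scott-continuous in its second argument; commutativity (OL1) gives the same in the first argument, and joint Scott-continuity of $O : X \times X \to X$ then follows from the standard fact that separate Scott-continuity of a monotone map on the product of two DCPOs is equivalent to joint Scott-continuity.

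For the remaining implication (i) $\Rightarrow$ (ii), I would fix $x, y \in X$, abbreviate $R = \{t \in X : O(x,t) \leq y\}$ and $s = \sup R = I_O(x,y)$. One direction is immediate: if $O(x,z) \leq y$ then $z \in R$, so $z \leq s$. The other direction reduces, by monotonicity of $O_x$, to the single inequality $O(x,s) \leq y$, i.e.\ to showing $s \in R$. Since $\downarrow y$ is Scott-closed and $O_x$ is Scott-continuous by (i), the preimage $R = O_x^{-1}(\downarrow y)$ is itself Scott-closed, hence contains the supremum of any directed subset of itself. The strategy is therefore to exhibit a directed subset $D \subseteq R$ with $\sup D = s$; Scott-closedness then forces $s \in R$, and Scott-continuity yields $O(x,s) = \sup_{t \in D} O(x,t) \leq y$, completing the proof.

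The main obstacle is producing such a directed $D$, because $R$ is only a downset and its finite-join closure $\tilde R = \{t_1 \vee \cdots \vee t_n : t_i \in R\}$, which is directed and shares the supremum $s$ with $R$, is not a priori contained in $R$. This is where density enters decisively: using Theorem \ref{densety} (order density agrees with topological density in the Scott topology) together with Lemma \ref{tt} (the densely ordered complete lattice $X$ is Hausdorff, so convergent nets have unique limits), one approximates each finite join by elements of $R$ from below and transfers the bound $O(x,\cdot) \leq y$ through the limit by Scott-continuity. Carrying out this approximation argument cleanly, and verifying that the resulting directed subset of $R$ indeed has supremum $s$, is the technical core of the proof.
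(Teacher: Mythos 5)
Your steps (ii) $\Rightarrow$ (iii) and (iii) $\Rightarrow$ (i) are fine. The first is exactly the paper's argument (plug $z=I_O(x,y)$ into the residuation equivalence to get $O(x,I_O(x,y))\leq y$, so the supremum in Lemma \ref{l5.1} is attained). The second, routed through the characterization of residuated functions and the fact that a lower adjoint preserves all existing suprema, then commutativity plus the equivalence of separate and joint Scott-continuity on the product order, is a legitimate and arguably cleaner alternative to the paper's direct computation, which fixes a non-decreasing net $(z_j)$, sets $w=\sup\{O(x,z_j)\}$ and squeezes $O(x,\sup z_j)$ between the two inequalities using item (iii); note the paper, too, only establishes continuity in the second variable and tacitly relies on commutativity and separate-implies-joint continuity, so you are not assuming more than it does there.

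The genuine gap is in (i) $\Rightarrow$ (ii), which you yourself flag as ``the technical core'' and then do not carry out. Everything reduces, as you say, to showing $s=I_O(x,y)\in R(x,y)$, i.e.\ to exhibiting a directed $D\subseteq R(x,y)$ with $\sup D=s$; Scott-closedness of $R(x,y)=O_x^{-1}(\downarrow y)$ buys you nothing until such a $D$ exists. The repair you gesture at --- approximate each finite join $t_1\vee\cdots\vee t_n$ of elements of $R(x,y)$ from below by elements of $R(x,y)$ and pass the bound through the limit --- is unjustified: a downset in a densely ordered complete lattice need not contain any directed subset whose supremum equals the supremum of the downset (in $[0,1]^2$, which is order dense, the downset generated by $(1,0)$ and $(0,1)$ has supremum $(1,1)$ but no directed subset converging up to it), so order density by itself does not make joins of elements of $R(x,y)$ approximable from inside $R(x,y)$, and neither Theorem \ref{densety} nor the Hausdorff property of Lemma \ref{tt} supplies the missing step. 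For comparison, the paper does commit to an argument at this exact point: for $z<I_O(x,y)$ it uses density to pick $t_0$ with $z<t_0<I_O(x,y)$ claimed to lie in $R(x,y)$ and concludes by monotonicity, and for $z=I_O(x,y)\notin R(x,y)$ it constructs, invoking compactness (Proposition \ref{compa}), a non-decreasing net $(z_j)$ in $R(x,y)$ with $z=\underline{\lim}_{j}z_j$ and finishes with Proposition \ref{lat_cont}. Whatever one thinks of the details there, your proposal leaves that entire construction open, so the cycle (ii) $\Rightarrow$ (iii) $\Rightarrow$ (i) $\Rightarrow$ (ii) does not close as written.
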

\begin{proof}
	(\ref{(i)} $\Rightarrow$ \ref{(ii)}): For any $x,y, z \in X$ suppose that $O(x,z)\leq y$. Then $z \in R(x,y)$ (cf. Equation (\ref{e5.2})). Hence, $z \leq \sup R(x,y)=I_O(x,y)$. Now assume that for $x,y,z \in X$ one has $z \leq I_O(x,y)$. If $z < I_O(x,y)$ then, since $X$ is dense order, there is $t_0 \in X$ such that $z < t_0 < I_O(x,y)$ and $ O(x,t_0) \leq y$. From the monotonicity of $O$ in each variable one has thate $O(x,z)\leq y$. On the other hand, if $z=I_O(x,y)$, so we have two possibilities: \begin{enumerate}[labelindent=\parindent, leftmargin=*,label=\textbf{(P\arabic*)}]
		\item $z \in R(x,y)$: In this case obviously that $O(x,z)\leq y$;
		\item $z \notin R(x,y)$: In this case, since $X$ is complete and of dense  order, then $X$ It is compact and dense in Scott's topology. Thus, by Proposition \ref{compa}, There is a non-decreasing net  $(z_j)_{j\in J}$ in $X$ such that $z_j < z$ and, since $O$ is monotonic in the second variable, from the residuation principle it follows that $O(x,z_j)\leq O(x,z)\leq y$, for all $j \in J$. Let's show that $z=\underline{\lim}_{j\in J} z_j$. In fact, let $ A $ be a Scott open containing $ z $. Since $\{z_j \in X\,|\,O(x,z_j)\leq y\}$ is directed (since  $(z_j)$ is non-decreasing) and $z=\sup\{z_j \in X\,|\,O(x,z_j)\leq y\}$, then by item \ref{(ii)} from Definition \ref{aberto} (Scott's open), it follows $\{z_j \in X\,|\,O(x,z_j)\leq y\}\cap A \neq \emptyset$. Therefore, for some $i \in J$, we have $x_j \in A$ for all $j \geq i$. Thus, by Definition \ref{converg}, we have $z_j \to z$. That is, $z=\underline{\lim}_{j\in J} z_j$. Finally, since $O$ is Scott-continuous, by Proposition \ref{lat_cont} \begin{equation}
		O(x,z)=O(x,\underline{\lim}_{j\in J} z_j) \leq \underline{\lim}_{j\in J} O(x,z_j)\leq y.
		\end{equation}
	\end{enumerate}
Therefore, anyway, one has that  $O(x,z) \leq y$.
	
	\noindent
	(\ref{(ii)} $\Rightarrow$ \ref{(iii)}): Assume that pair $(O,I_O)$ satisfies the residuation principle. So since $I_O(x,y) \leq I_O(x,y)$ for all $x,y \in X$, it follows that $O(x,I_O(x,y))\leq y$. This means that $I_O(x,y)\in R(x,y)$ and  $\sup R(x,y)=\max R(x,y)$.
	
	\noindent
	(\ref{(iii)} $\Rightarrow$ \ref{(i)}): Suppose that $I_O (x,y)=\max \{t \in X\,|\,O(x,t) \leq y\}$ for all $x,y \in X$. We must show that $O\bigg(x,\sup\{z_j\,|\, j\in J\}\bigg)=\sup\{ O(x,z_j)\,|\,j\in J\}$, for each $x \in X$ and for any non-decreasing net $(z_j)_{j\in J}$ in $X$. On the one hand, the monotonicity of $O$ and by definition of supremum it follows that \begin{equation} \label{1*} \sup\{ O(x,z_j)\,|\, j\in J\} \leq O\bigg(x,\sup\{z_j\,|\, j\in J\}\bigg)\end{equation} On the other hand, let $w= \sup\{ O(x,z_j)\,|\, j\in J\}$. Then $O(x,z_j) \leq w$ and so for all $j \in J$, $z_j \in \{t\in X\,|\,O(x,t) \leq w\}$ and consequently, $z_j \leq I_O(x,w)$ for all $j \in J$. Therefore, by monotonicity of $O$ one has  \begin{equation} \label{2*} O\bigg(x,\sup\{z_j\,|\, j\in J\}\bigg)\leq O(x,I_O(x,w))\leq w =\sup\{ O(x,z_j)\,|\, j\in J\}.\end{equation} Therefore, from inequalities (\ref{1*}) and (\ref{2*}), it is concluded that $O$ is continuous.\end{proof}

\begin{corollary}
If $O$ is a Quasi-overlap over $X$ and $X$ is order dense, then $O$ and $I_O$ satisfy residuation principle and $I_O(x,y)=\max\{z\,|\, O(x,z)\leq y\}$.
\end{corollary}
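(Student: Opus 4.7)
The plan is to apply Theorem~\ref{resigalois} directly, viewing the corollary as the forward implications (i)~$\Rightarrow$~(ii) and (i)~$\Rightarrow$~(iii) of that theorem under the stated hypotheses. First, I would align terminology: the hypothesis ``$X$ is order dense'' in the sense of Definition~\ref{order density}, together with the standing assumption that $X$ is a complete lattice, is exactly the ``complete lattice of dense order'' framework required by Theorem~\ref{resigalois}. By Theorem~\ref{densety}, order density coincides with topological density in the Scott topology, so this framework is coherent, and the machinery invoked in the proof of Theorem~\ref{resigalois} (compactness via Proposition~\ref{com}, existence of monotone nets approaching suprema, Hausdorffness via Lemma~\ref{tt}) is available.

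Once this identification is made, the proof is a one-line invocation: from (i)~$\Rightarrow$~(ii) of Theorem~\ref{resigalois}, the pair $(O,I_O)$ satisfies the residuation principle; from (i)~$\Rightarrow$~(iii), one obtains the maximum representation $I_O(x,y)=\max\{z\in X\mid O(x,z)\leq y\}$. No further computation is required beyond recognizing both conclusions as specializations of the theorem.

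The main obstacle is the wording ``Quasi-overlap'' in the statement. Theorem~\ref{resigalois} crucially uses Scott-continuity to derive (ii) and (iii): the key step in the proof of (i)~$\Rightarrow$~(ii) passes an inequality $O(x,z_j)\leq y$ through a lower limit of a nondecreasing net, and this inversion of limit and function requires Scott-continuity, not merely monotonicity and density. Density of $X$ alone is not sufficient to force $\sup R(x,y)\in R(x,y)$ (using the notation of Equation~(\ref{e5.2})). Consequently, I would read the corollary as implicitly restricting to overlap functions (Scott-continuous quasi-overlaps in the sense introduced after Definition~\ref{overlaps}); under that reading the statement is an immediate corollary of Theorem~\ref{resigalois}, since hypothesis (i) is then available and both (ii) and (iii) follow at once.
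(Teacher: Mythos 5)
Your route is exactly the paper's intended one: the corollary is stated without any separate argument and is meant to be read off from Theorem~\ref{resigalois} as the implications (i)~$\Rightarrow$~(ii) and (i)~$\Rightarrow$~(iii), so the ``one-line invocation'' is the whole proof. Your caveat about the word ``quasi-overlap'' is also well taken, and it is not merely a matter of taste: as literally stated the corollary is false, because order density of $X$ does not make every quasi-overlap Scott-continuous. Take $X=[0,1]$ (a complete, order-dense lattice) and define $O(x,y)=xy/2$ for $(x,y)\neq(1,1)$ and $O(1,1)=1$. This satisfies \ref{ol1}--\ref{ol4}, so it is a quasi-overlap, but $\{t\in X \mid O(1,t)\leq 1/2\}=[0,1)$ has supremum $1$ and no maximum, and residuation fails: $I_O(1,1/2)=1$ while $O(1,1)=1\not\leq 1/2$. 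So the set $R(x,y)$ of Equation~(\ref{e5.2}) need not contain its supremum, exactly as you suspected, and some continuity hypothesis cannot be dropped.

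Consequently your reading---that ``quasi-overlap'' here should be ``overlap function,'' i.e.\ a Scott-continuous quasi-overlap in the sense of the definition following Definition~\ref{overlaps}---is the only one under which the statement is a genuine corollary, and under that reading your argument is complete and coincides with the paper's. The only refinement worth making explicit is that the equivalence in Theorem~\ref{resigalois} already shows the corollary cannot hold for discontinuous quasi-overlaps: if every quasi-overlap on an order-dense complete lattice satisfied residuation, then by (ii)~$\Rightarrow$~(i) every such quasi-overlap would be Scott-continuous, which the example above refutes.
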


\begin{definition}
	The functions $O$ and $I_O$ are respectively called residuated quasi-overlap and residuated implication (or $R_O$-implication), if any of the items in Theorem \ref{resigalois} are checked.
\end{definition}

In the following, properties that $ R_O $-implications and their residuated quasi-overlap satisfy are presented.

\begin{proposition}\label{propop}
	Let $X$ be a complete lattice of order dense and $O$ a residuated quasi-overlap  over $X$. Then:\begin{enumerate}[labelindent=\parindent, leftmargin=*,label=\textbf{(\roman*)}]
		\item $I_O$ satisfies \textbf{(NP)} if, and only if, $1$ is neutral element of $O$;

		\item $I_O$ satisfies \textbf{(IP)} if, and only if, $O$ is deflationary: $$O(x,1)\leq x, \,\,x\in X\textrm{;}$$
		\item $I_O$ satisfies \textbf{(OP)} if, and only if, $O$ have neutral element $1$.
	\end{enumerate}
\end{proposition}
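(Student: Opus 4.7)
The plan is to prove each of the three equivalences separately, leveraging Theorem \ref{resigalois} in two ways: the residuation principle $O(x,z)\leq y \Longleftrightarrow z\leq I_O(x,y)$, and the fact that the supremum defining $I_O(x,y)$ is actually a maximum, so $1\leq I_O(x,y)$ is equivalent to $O(x,1)\leq y$.

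For (i), in the forward direction I would assume $O(1,t)=t$ and compute directly
$I_O(1,y)=\sup\{t\in X\mid O(1,t)\leq y\}=\sup\{t\in X\mid t\leq y\}=y$. For the converse, assuming $I_O(1,y)=y$ for every $y$, the residuation principle gives $O(1,z)\leq y \iff z\leq y$; specializing $y=z$ yields $O(1,z)\leq z$, and specializing $y=O(1,z)$ yields $z\leq O(1,z)$, hence $O(1,z)=z$.

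For (ii), the identity $I_O(x,x)=1$ is equivalent to $1\leq I_O(x,x)$ (since $1$ is the top of $X$), which by the residuation principle is equivalent to $O(x,1)\leq x$. This is essentially a one-line argument once the residuation principle is in hand.

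For (iii), the forward direction is a short chain of equivalences: assuming $O(x,1)=x$, we get $x\leq y \iff O(x,1)\leq y \iff 1\leq I_O(x,y) \iff I_O(x,y)=1$. The converse is the step I expect to be the main obstacle, since (OP) at first glance only concerns comparisons with $1$. The idea is to produce the two inequalities defining neutrality separately. From $x\leq x$ and (OP) we get $I_O(x,x)=1$, which by part (ii) (already established) yields $O(x,1)\leq x$. For the reverse inequality, I would apply residuation to the trivial inequality $O(x,1)\leq O(x,1)$ to conclude $1\leq I_O(x,O(x,1))$, hence $I_O(x,O(x,1))=1$, and then invoke (OP) to obtain $x\leq O(x,1)$. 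Combining these gives $O(x,1)=x$, as required.
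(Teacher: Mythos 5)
Your proof is correct, and parts (ii) and (iii) follow essentially the same route as the paper: (ii) is the same one-line equivalence $I_O(x,x)=1\Leftrightarrow O(x,1)\leq x$ via residuation, and in (iii) the paper likewise obtains $O(x,1)\leq x$ from $I_O(x,x)=1$ and then gets $x\leq O(x,1)$ by noting $I_O(x,O(x,1))=1$ and applying \textbf{(OP)}, exactly as you do. The genuine difference is in the direction ``\textbf{(NP)} implies $1$ is neutral'' of item (i). The paper argues by contradiction and uses the order-density hypothesis: from \textbf{(NP)} it first extracts $O(1,y)\leq y$, and if $O(1,y_0)<y_0$ strictly, density provides $z$ with $O(1,y_0)<z<y_0$, whence residuation gives $y_0\leq I_O(1,z)=z$, a contradiction. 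You instead work purely with the adjunction: under \textbf{(NP)} the residuation principle reads $O(1,z)\leq y\Leftrightarrow z\leq y$, and the two specializations $y=z$ and $y=O(1,z)$ yield $O(1,z)\leq z$ and $z\leq O(1,z)$ directly. Your argument is cleaner and slightly more general: it never uses density of the order, only the residuation principle (so it applies to any residuated quasi-overlap on a complete lattice, with density entering solely through Theorem \ref{resigalois} as a sufficient condition for residuation), whereas the paper's version ties this direction of (i) to the density assumption in the statement of Proposition \ref{propop}.
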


	\begin{proof} 
The proof is based on considerations similar to \cite{Krol}. But adapted to the lattice context. Indeed, \\
	\noindent
	\ref{(i)} ($\Rightarrow$) Suppose that for all $y \in X$ \begin{equation}\label{6}
	I_O (1,y)=\max \{t \in X\,|\,O(1,t) \leq y\}=y.
	\end{equation} So for an arbitrary $y \in X$ one has $O(1,y) \leq y$. If for some $y_0$ in $X$, one has $O(1,y_0) < y_0$, then by density of $X$  exists $z$ such that $z< y_0$ and $O(1,y_0) < z$. According to the residuation principle, $z<  y_0\leq I_O(1,z)$, which contradicts the equation (\ref{6}). \\
	\noindent
	($\Leftarrow$)  Suppose that $O(1,r)=r$, for all $r\in X$. Then \begin{eqnarray}
	I_O (1,y)&=&\max \{t \in X\,|\,O(1,t) \leq y\}\nonumber\\
	&=&\max \{t \in X\,|\,t \leq y\}\nonumber\\
	&=&y.
	\end{eqnarray}
	\noindent
	\ref{(ii)} Just note that for an arbitrary $x \in X$, we have $$I_O(x,x)=\max  \{t \in X\,|\,O(x,t) \leq_X x\}=1 \Leftrightarrow O(x,1) \leq x.$$
	
	\noindent
	\textbf{(iii)} ($\Rightarrow$) Suppose for each $x,y \in X$, such that $x$ and $ y$ are comparables, one has $x\leq y \Leftrightarrow I_O(x,y)=1$. Then $I_O(x,x)=\max  \{t \in X\,|\,O(x,t) \leq x\}=1$. This means that $O(x,1) \leq x$, for all $x \in X$. Moreover, by the monotonicity of $O$, $$I_O(x,O(x,1))=\max \{t \in X\,|\,O(x,t) \leq O(x,1)\}=1.$$ Thus, by \textbf{(OP)}, $x\leq O(x,1)$. So for an arbitrary $x\in X$, $O(x,1)=x$.\\
	\noindent
	($\Leftarrow$) Suppose $ O $ has neutral element $ 1$. If for  $x,y \in X$,  $$I_O (x,y)=\max \{t \in X\,|\,O(x,t) \leq y\}=1,$$ then we have $x=O(x,1)\leq y$. On the other hand, if for each $x,y \in X$, if $x \leq y$, so since $1$ is neutral element of $O$, one has $O(x,1)=x\leq y$ . Therefore, by residuation, it follows that $I_O(x,y)=1$.\end{proof}

\begin{proposition}
Let $X$ be a complete lattice of order dense and $O$ a residuated quasi-overlap  over $X$. Under these conditions: \begin{enumerate}[labelindent=\parindent, leftmargin=*,label=\textbf{(\roman*)}]
		\item If $I_O$ satisfies \textbf{(EP)}, $O(x,O(y,z))$ and $O(y,O(x,z))$ are comparables for all $x,y,z \in X$, then $O$ is associative;
		
		\item If $O$ is associative, then $I_O$ satisfies \textbf{(EP)}.
		\end{enumerate}
\end{proposition}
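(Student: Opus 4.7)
The plan is to let the residuation principle translate assertions about $O$ into assertions about $I_O$ and vice versa. The essential lever is Theorem \ref{resigalois}, which both supplies the adjunction $(O, I_O)$ and provides the characterization $I_O(u,v) = \max\{t \in X : O(u,t) \leq v\}$.

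For (ii), I would assume $O$ associative. Since $O$ is also commutative by \ref{ol1}, Proposition \ref{salvou} upgrades associativity to the exchange identity $O(u, O(v, t)) = O(v, O(u, t))$ for all $u,v,t$. Fixing $x, y, z \in X$ and applying Theorem \ref{resigalois}(iii) followed by the residuation principle (with the pair $(y,\cdot)$), I would compute
\[
I_O(x, I_O(y, z)) = \max\{t \in X : O(x, t) \leq I_O(y, z)\} = \max\{t \in X : O(y, O(x, t)) \leq z\},
\]
and symmetrically $I_O(y, I_O(x, z)) = \max\{t \in X : O(x, O(y, t)) \leq z\}$. The exchange identity for $O$ identifies these two sets, so their maxima coincide, yielding \textbf{(EP)} for $I_O$.

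For (i), set $a := O(x, O(y, z))$ and $b := O(y, O(x, z))$. Two successive applications of residuation give, for every $w \in X$,
\[
a \leq w \;\Longleftrightarrow\; O(y, z) \leq I_O(x, w) \;\Longleftrightarrow\; z \leq I_O(y, I_O(x, w)),
\]
and similarly $b \leq w \Longleftrightarrow z \leq I_O(x, I_O(y, w))$. Invoking the hypothesis \textbf{(EP)} for $I_O$ to equate $I_O(y, I_O(x, w))$ and $I_O(x, I_O(y, w))$, I would conclude $a \leq w \Longleftrightarrow b \leq w$ for every $w$. Thus $a$ and $b$ share the same set of upper bounds, and the comparability hypothesis makes $a = b$ immediate by antisymmetry (in fact the argument shows $\uparrow\! a = \uparrow\! b$, so comparability is not strictly needed). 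Since $x,y,z$ were arbitrary, Proposition \ref{salvou} converts the resulting exchange identity for $O$ into associativity.

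The main obstacle is only clerical: one must apply the residuation principle in the correct direction at each step and keep the arguments of $O$ and $I_O$ paired correctly. There is no deep mathematical difficulty once Theorem \ref{resigalois}(iii) is in hand.
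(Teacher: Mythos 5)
Your argument is correct, and part (ii) is essentially the paper's proof: the same chain of equalities between the sets $\{t\,|\,O(x,t)\leq I_O(y,z)\}$ and $\{t\,|\,O(y,t)\leq I_O(x,z)\}$ obtained by pushing the conditions through residuation and using associativity together with commutativity (whether one routes this through Proposition \ref{salvou} or writes the commutativity step inside the chain is immaterial). Part (i), however, is a genuinely different route. The paper argues by contradiction: it assumes the exchange identity fails at some triple, invokes the comparability hypothesis to assume without loss of generality a strict inequality $O(x,O(y,z))< O(y,O(x,z))$, and then transports this strict inequality through two applications of residuation, \textbf{(EP)}, and two more applications of residuation to reach $O(y,O(x,z))< O(y,O(x,z))$. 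Your version is direct: for every $w$, two applications of residuation give $O(x,O(y,z))\leq w \Leftrightarrow z\leq I_O(y,I_O(x,w))$ and $O(y,O(x,z))\leq w \Leftrightarrow z\leq I_O(x,I_O(y,w))$, so \textbf{(EP)} makes the two elements have the same principal up-sets, and antisymmetry (taking $w$ equal to each of them) yields equality; Proposition \ref{salvou} then gives associativity. This buys two things: it avoids the delicate question of whether strictness is preserved through the residuation equivalences (the adjunction only transports non-strict inequalities, so the paper's chain of strict inequalities needs extra justification), and it shows the comparability hypothesis in item (i) is in fact superfluous, so your argument proves a slightly stronger statement than the one asserted.
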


\begin{proof}\noindent
	\textbf{(i)} Assume that $ I_O $ fulfills the property of the exchange principle \textbf{(EP)}. Suppose that there are $x,y,z \in X$ such that $O(x,O(y,z))\neq O(O(x,y),z)$. Then, by Proposition \ref{salvou} it follows that $O(x,O(y,z))\neq O(y,O(x,z))$. Thus, by hipoteses, we can assume without loss of generality that $O(x,O(y,z))< O(y,O(x,z))$. Applying two times the residuation principle we get $$z< I_O(y,I_O(x,O(y,O(x,z)))).$$ Using the exchange principle we have $z<I_O(x,I_O(y,O(y,O(x,z))))$. Applying the residuation principle again two times we go back to $$O(y,O(x,z))< O(y,O(x,z)),$$ which is trivially a contradiction.\\
	\noindent
	\textbf{(ii)} Assume that $ O $ is associative. From residuation principle we have \begin{eqnarray*}
		I_O(x,I_O(y,z))&=&\max \{t \in X\,|\,O(x,t) \leq I_O(y,z)\}\\
		&=&\max \{t \in X\,|\,O(y,O(x,t)) \leq z\}\\
		&=&\max \{t \in X\,|\,O(O(y,x),t) \leq z\}\\
		&=&\max \{t \in X\,|\,O(O(x,y),t) \leq z\}\\
		&=&\max \{t \in X\,|\,O(x,O(y,t)) \leq z\}\\
		&=&\max \{t \in X\,|\,O(y,t) \leq I_O(x,z)\}\\
		&=& I_O(y,I_O(x,z)).
	\end{eqnarray*}
\end{proof}

\section{Quasi-overlap conjugated and their induced implications}

We begin this section by presenting a definition that generalizes automorphisms of bounded lattices, taking these lattices as topological spaces.

\begin{definition}\label{pro-aut-char-top} 
	Let $X$ be a bounded lattice and $\Omega$ a topology on $X$. A function $\rho: X\rightarrow X$ is a $\Omega$-automorphism if:
	\begin{enumerate}[label=\textbf{(\roman*)}]
		\item $\rho$ is bijective;
		\item $\rho$ is continuous according to topology $\Omega$;
		\item $x\leqslant y$ if, and only if,  $\rho(x)\leqslant \rho(y)$.
	\end{enumerate}
\end{definition}

\begin{remark}
	The set of all $\Omega$-automorphism of $X$ is denoted by $\textrm{Aut}_{\Omega}(X)$.
\end{remark}
\begin{lemma}
The set	$\textrm{Aut}_{\Omega}(X)$ of all $\Omega$-automorphism of a bounded lattice$X$ is a group under composition of mappings. 
\end{lemma}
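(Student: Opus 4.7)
The plan is to verify the four group axioms --- associativity, existence of an identity element, closure under composition, and existence of inverses --- for the set $\textrm{Aut}_{\Omega}(X)$ under composition of mappings.

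Associativity is immediate since function composition is always associative. For the identity, $\text{id}_X$ is a bijection, is continuous with respect to any topology on $X$ (the preimage of an open set is itself), and trivially satisfies $x\leq y \iff \text{id}_X(x)\leq \text{id}_X(y)$; hence $\text{id}_X\in \textrm{Aut}_{\Omega}(X)$.

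For closure, given $\rho,\sigma\in \textrm{Aut}_{\Omega}(X)$, I would verify the three conditions of Definition \ref{pro-aut-char-top} for $\rho\circ\sigma$: the composition of two bijections is a bijection; the composition of two continuous maps is continuous; and the biconditional chain $x\leq y \iff \sigma(x)\leq \sigma(y) \iff \rho(\sigma(x))\leq \rho(\sigma(y))$ establishes condition (iii).

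For inverses, the bijectivity of $\rho^{-1}$ is automatic, and condition (iii) for $\rho^{-1}$ follows from the corresponding condition for $\rho$ by the substitution $x=\rho(u)$, $y=\rho(v)$. The main obstacle will be showing $\rho^{-1}$ is continuous, since in general a continuous bijection between topological spaces need not be a homeomorphism. In the intended setting of this paper, where $\Omega$ is Scott's topology on a complete lattice $X$, I would resolve this by invoking Proposition \ref{TD}: it suffices to show that $\rho^{-1}$ preserves directed suprema. If $D\subseteq X$ is directed, then $\rho^{-1}(D)$ is directed by monotonicity of $\rho^{-1}$, and $\rho^{-1}(\sup D)$ is clearly an upper bound of $\rho^{-1}(D)$; if $y$ is any other upper bound then applying $\rho$ and using condition (iii) yields $d\leq \rho(y)$ for every $d\in D$, so $\sup D\leq \rho(y)$ and hence $\rho^{-1}(\sup D)\leq y$. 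Thus $\rho^{-1}(\sup D)=\sup \rho^{-1}(D)$, so $\rho^{-1}$ is Scott-continuous and therefore belongs to $\textrm{Aut}_{\Omega}(X)$, completing the verification that $\textrm{Aut}_{\Omega}(X)$ is a group.
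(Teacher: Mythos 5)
Your verification is correct, and it is worth noting that the paper itself offers no argument here beyond declaring the check ``routine.'' Your proof supplies exactly the one step that is \emph{not} routine: the continuity of $\rho^{-1}$. As you observe, a continuous bijection need not be a homeomorphism, and in fact for an arbitrary topology $\Omega$ on a bounded lattice the statement can fail: on $X=[0,1]$ one may take $\Omega=\{\emptyset,X\}\cup\{(a^{1/2^{n}},1]\mid n\geq 0\}$ for a fixed $a\in(0,1)$; then $\rho(x)=x^{2}$ is a bijection satisfying condition (iii) of Definition \ref{pro-aut-char-top} and is $\Omega$-continuous (the preimage of $(a^{1/2^{n}},1]$ is $(a^{1/2^{n+1}},1]$), yet $\rho^{-1}(x)=\sqrt{x}$ is not, since $(a^{2},1]\notin\Omega$. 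So closure under inverses genuinely requires the setting you restrict to, namely $\Omega=\sigma(X)$ with $X$ a complete lattice (or at least a DCPO, which a bounded lattice need not be), and your argument there is sound: condition (iii) plus bijectivity makes $\rho$ an order isomorphism, your computation $\rho^{-1}(\sup D)=\sup\rho^{-1}(D)$ for directed $D$ is correct, and Proposition \ref{TD} then gives Scott-continuity of $\rho^{-1}$. In effect you have shown that in the Scott setting condition (ii) of Definition \ref{pro-aut-char-top} is redundant (any order isomorphism of a DCPO is automatically Scott-continuous together with its inverse), and that the lemma as literally stated, for an arbitrary topology on a bounded lattice, needs the stronger hypothesis you use; this is a defect of the paper's statement rather than of your proof. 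The remaining axioms (identity, closure, associativity) are handled correctly and are indeed routine.
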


\begin{proof}
It is routine to check this proof.	
\end{proof}

Note that Definition \ref{pro-aut-char-top} generalizes the definition of automorphisms over $ [0,1] $, which implies continuity in the Euclidean topology. In addition, since $ \rho $ is a continuous bijection whose inverse $ \rho ^ {- 1} $ is also continuous, it follows that $ \rho $ is an application known in topology as homeomorphism. It is an application that preserves the topological structure of your space. It should also be noted that $ \rho $ (as well as its inverse) can be seen as an order isomorphism.

Furthermore, the next result shows that the class of quasi-overlap functions  is closed under $ \Omega $-automorphisms, where $ \Omega $ represents, in this context, Scott's topo\-logy, and for this reason, instead of $ \Omega $-automorphism the term Scott-automorphism is used.

\begin{proposition}\label{conjugated}
	Let $O$ be a quasi-overlap function and $\rho$ a Scott-automorphism, both defined over a complete lattice $X$. Then, the conjugated of $O$, denoted by $O^\rho$, is also quasi-overlap function. Moreover, if $O$ is Scott-continuous,  $O^\rho$ is also Scott-continuous.
\end{proposition}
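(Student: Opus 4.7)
The standard conjugate to work with is
\[
O^\rho(x,y) \;=\; \rho^{-1}\bigl(O(\rho(x),\rho(y))\bigr),
\]
so the plan is to check (OL1)--(OL4) directly for $O^\rho$ and then argue Scott-continuity by composition.

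First I would establish two structural facts about $\rho$. The ``iff'' clause in Definition \ref{pro-aut-char-top} makes $\rho$ an order isomorphism, so $\rho(0)=0$ and $\rho(1)=1$: if $\rho(w)=0$ then for every $x\in X$, $\rho(w)\le \rho(x)$, hence $w\le x$, forcing $w=0$, and similarly for $1$. Next, using the order-reflecting property together with the bijectivity, any directed set $D\subseteq X$ satisfies $\rho(D)$ directed and $\rho(\sup D)=\sup \rho(D)$ (and dually for filtered infima); by Theorem \ref{conecta} this gives Scott-continuity of $\rho^{-1}$ as well. So both $\rho$ and $\rho^{-1}$ are Scott-continuous automorphisms of $\langle X,\le\rangle$.

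With these in hand I would verify the quasi-overlap axioms in order. (OL1) is immediate from the commutativity of $O$ applied inside $\rho^{-1}(\cdot)$. (OL2): $O^\rho(x,y)=0$ iff $O(\rho(x),\rho(y))=\rho(0)=0$ (because $\rho^{-1}$ is a bijection sending $0$ to $0$) iff $\rho(x)=0$ or $\rho(y)=0$ iff $x=0$ or $y=0$. (OL3) is analogous, using $\rho(1)=1$. (OL4) follows from the chain: $x_1\le x_2\Rightarrow \rho(x_1)\le\rho(x_2)\Rightarrow O(\rho(x_1),\rho(y))\le O(\rho(x_2),\rho(y))\Rightarrow O^\rho(x_1,y)\le O^\rho(x_2,y)$, and symmetrically in the second variable.

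For the final clause, I would write $O^\rho = \rho^{-1}\circ O\circ(\rho\times\rho)$ and argue that each factor is Scott-continuous: $\rho\times\rho$ on $X\times X$ (with the product order) preserves directed suprema because $\rho$ does coordinatewise; $O$ is Scott-continuous by hypothesis; and $\rho^{-1}$ is Scott-continuous by the first paragraph of the plan. Since compositions of Scott-continuous maps are Scott-continuous (equivalently, each preserves directed suprema in the sense of Proposition \ref{lat_cont}), $O^\rho$ is Scott-continuous. The only real subtlety I foresee is the one addressed at the start, namely justifying that $\rho^{-1}$ is automatically Scott-continuous from Definition \ref{pro-aut-char-top} (the definition only asks explicitly for continuity of $\rho$); once that is secured through the order-isomorphism argument, the rest is routine verification.
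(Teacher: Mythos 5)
Your proposal is correct and follows essentially the same route as the paper: a direct verification of \textbf{(OL1)}--\textbf{(OL4)} for $O^\rho(x,y)=\rho^{-1}\bigl(O(\rho(x),\rho(y))\bigr)$ followed by Scott-continuity via composition of Scott-continuous maps. You simply make explicit some facts the paper uses tacitly (that $\rho(0)=0$, $\rho(1)=1$, and that $\rho^{-1}$ is itself Scott-continuous because $\rho$ is an order isomorphism preserving directed suprema), which is a welcome but not substantively different refinement.
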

\begin{proof}
	\ref{ol1}: It follows directly from the fact that composed of non-decreasing functions is a non-decreasing function.;\\	
	\ref{ol2}: Immediately follows from the commutativity of $O$;\\
	\ref{ol3}: ($\Rightarrow$) Suppose that $O^{\rho}(x,y)=0$. Then we have \begin{eqnarray*}
		\rho^{-1}\left(O\left(\rho(x),\rho(y)\right)\right)=0 &\Leftrightarrow& O\left(\rho(x),\rho(y)\right)=0\\
		&\Leftrightarrow& \rho(x)=0 \textrm{ or } \rho(y)=0\\
		&\Leftrightarrow& x=0 \textrm{ or } y=0
	\end{eqnarray*}
	\noindent
	($\Leftarrow$) If $x=0$ or $y=0$ then, suppose without loss of generality that $x = 0$. Then, $\rho(x)=0$ and so, \begin{eqnarray*}
		O^{\rho}(x,y)&=&	\rho^{-1}\left(O\left(\rho(x),\rho(y)\right)\right)\\
		&=&\rho^{-1}\left(O\left(0,\rho(y)\right)\right)\\
		&=&\rho^{-1}(0)\\
		&=&0.
	\end{eqnarray*}
	\noindent	
	\ref{ol4}: ($\Rightarrow$) Suppose that $O^{\rho}(x,y)=1$. Then one has that \begin{eqnarray*}
		\rho^{-1}\left(O\left(\rho(x),\rho(y)\right)\right)=1 &\Leftrightarrow& O\left(\rho(x),\rho(y)\right)=1\\
		&\Leftrightarrow& \rho(x)= \rho(y)=1\\
		&\Leftrightarrow& x= y=1
	\end{eqnarray*}
	\noindent
	($\Leftarrow$) Suppose that $x=y=1$, $\rho(x)=\rho(y)=1$. Then one has that \begin{eqnarray*}
		O^{\rho}(x,y)&=&\rho^{-1}\left(O\left(1,1\right)\right)\\
		&=&\rho^{-1}(1)\\
		&=&1.
	\end{eqnarray*}
	\noindent
	The continuity of $O^{\rho}$ follows from the composition of continuous functions. \end{proof}

A first application of the action of Scott-automorphisms on quasi-overlap functions is that the conjugated of an induced implication of a quasi-overlap $O$ coincides with the induced implication of the conjugated $O^{\rho}$.

\begin{proposition}\label{coincide}
	$I_O ^{\rho}$ coincides with $I_{O^{\rho}}$.
\end{proposition}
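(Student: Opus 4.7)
The plan is to unfold both sides in terms of the definitions and then use the fact that $\rho$, being a Scott-automorphism, is an order isomorphism, so it (and its inverse) commute with suprema of arbitrary subsets of $X$.

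First, I would write out
\begin{equation*}
I_{O^{\rho}}(x,y) \;=\; \sup\bigl\{t \in X \,\big|\, O^{\rho}(x,t)\leq y\bigr\} \;=\; \sup\bigl\{t \in X \,\big|\, \rho^{-1}\!\left(O(\rho(x),\rho(t))\right)\leq y\bigr\}.
\end{equation*}
Next, using item $(iii)$ of Definition \ref{pro-aut-char-top}, which says $\rho$ (hence $\rho^{-1}$) is order-reflecting as well as order-preserving, I would rewrite $\rho^{-1}(a)\leq y \Longleftrightarrow a\leq \rho(y)$, so the set of candidates becomes
\begin{equation*}
\bigl\{t \in X \,\big|\, O(\rho(x),\rho(t))\leq \rho(y)\bigr\}.
\end{equation*}

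Then I would perform the change of variable $s=\rho(t)$, which is a bijection of $X$ since $\rho$ is bijective. This rewrites the candidate set as $\rho^{-1}(R(\rho(x),\rho(y)))$, where $R(a,b)$ is the set introduced in (\ref{e5.2}). Thus
\begin{equation*}
I_{O^{\rho}}(x,y) \;=\; \sup \rho^{-1}\bigl(R(\rho(x),\rho(y))\bigr).
\end{equation*}

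The last step is the key one: to conclude that $\sup \rho^{-1}(A)=\rho^{-1}(\sup A)$ for any $A\subseteq X$ admitting a supremum. This is where I would make explicit use of the fact that conditions $(i)$ and $(iii)$ of Definition \ref{pro-aut-char-top} together make $\rho^{-1}$ an order isomorphism of $X$, and a short argument (upper bounds correspond to upper bounds under the bijection, and the correspondence reflects order in both directions) shows that every order isomorphism preserves suprema of arbitrary sets. Applying this to $A=R(\rho(x),\rho(y))$ yields
\begin{equation*}
I_{O^{\rho}}(x,y) \;=\; \rho^{-1}\bigl(\sup R(\rho(x),\rho(y))\bigr) \;=\; \rho^{-1}\bigl(I_O(\rho(x),\rho(y))\bigr) \;=\; I_O^{\rho}(x,y),
\end{equation*}
which is the desired identity. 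The only mild obstacle is this last observation about suprema under order isomorphisms; Scott-continuity of $\rho$ (condition $(ii)$) only guarantees preservation of directed suprema, so I would rely on the purely order-theoretic consequence of conditions $(i)$ and $(iii)$ rather than on the topological condition to cover the general set $R(\rho(x),\rho(y))$, which need not be directed.
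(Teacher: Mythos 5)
Your proof is correct and follows essentially the same route as the paper's: unfold the definitions, transport the defining set through the bijection $\rho$, and commute $\rho^{-1}$ with the supremum. If anything, you are more careful at the final step: the paper silently replaces $\rho^{-1}\left(\sup\{\rho(z)\,|\,O(\rho(x),\rho(z))\leq\rho(y)\}\right)$ by $\max\{z\,|\,\dots\}$, whereas you justify the exchange by the purely order-theoretic fact that an order isomorphism preserves arbitrary suprema, correctly noting that Scott-continuity alone would not suffice since the set $R(\rho(x),\rho(y))$ need not be directed.
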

\begin{proof}
	Indeed, \begin{eqnarray*}
		I_O ^{\rho}(x,y) &=& \rho^{-1} \left(I_O(\rho(x),\rho(y))\right)\\
		&=&\rho^{-1} \left(\sup\{\rho(z) \in X| O(\rho(x),\rho(z)) \leq \rho(y)\}\right)\\
		&=&\max\{z \in X | \rho^{-1} \left(O(\rho(x),\rho(z))\right)\leq y\}\\
		&=& \max\{z \in X | O^{\rho} \left(x, z\right)\leq y\}\\
		&=&I_{O^{\rho}}(x,y).
	\end{eqnarray*}
\end{proof}

\begin{remark}
	The above proposition states that the processes for obtaining conjugated adjunct or adjunct conjugated are invariant, as shown in Figure \ref{diagrama1}.
	
	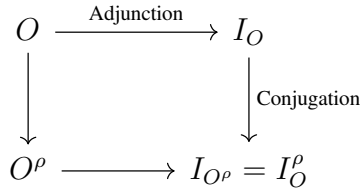
\begin{figure}[!htb]
		$$
		\begin{tikzcd}[row sep=large, column sep=large]
		O \arrow{r}{\textrm{Adjunction}} \arrow{d}
		& I_O \arrow{d}{\textrm{Conjugation}} \\
		O^{\rho} \arrow[r]	& I_{O^{\rho}}=I_O ^{\rho}
		\end{tikzcd}$$
		\caption{Diagram of Adjunct and Conjugation}\label{diagrama1}
	\end{figure}
\end{remark}

Another interesting application of quasi-overlap conjugated is linked to the notion of closed operators\footnote{Remember that a function $f: X \rightarrow X$ over a poset $\langle X, \leq \rangle$ is a closed operator on $X$ if $f$ is non-descreasing, idempotent $\left(f(f(x))=f(x)\right)$, and inflationary $\left(x \leq f(x)\right)$.}. The following is a theoretical framework for obtaining the closure of the conjugated of  $O$ and $I_O$.

\begin{proposition}\label{ou}
	Let $X$ be a complete lattice of dense order and $O^{\rho}$ a conjugated of quasi-overlap function  $O$ set over $X$. The following conditions are equivalent:
	
	\begin{enumerate}[labelindent=\parindent, leftmargin=*,label=\textbf{(\roman*)}]
		\item  $O^{\rho}$ is residuated if, and only if, $O_y ^{\rho}$ and $O_x ^{\rho}$ are both monotonic and Scott-continuous;
		
		\item $O^{\rho}$ is residuated if, and only if, $O_y ^{\rho}$ and $O_x ^{\rho}$ are residuated.
	\end{enumerate}
\end{proposition}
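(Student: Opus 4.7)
The plan is to reduce both biconditionals in the proposition to a single content statement about Scott-continuity, leveraging the machinery already built up in Theorem \ref{resigalois}. The first move would be to invoke Proposition \ref{conjugated}, which guarantees that $O^\rho$ is itself a quasi-overlap function on the complete lattice of dense order $X$. Consequently Theorem \ref{resigalois} applies verbatim to $O^\rho$, so that $O^\rho$ is residuated if and only if $O^\rho$ is Scott-continuous. Both items (i) and (ii) will then be established by characterizing joint Scott-continuity of the bivariate $O^\rho$ in terms of a property of its partial maps $O_x^\rho$ and $O_y^\rho$.

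For item (i), I would argue that since $O^\rho$ is monotonic in each argument by axiom \textbf{(OL4)}, the maps $O_x^\rho$ and $O_y^\rho$ are automatically monotonic, so the only real content is the equivalence between joint Scott-continuity of $O^\rho$ and separate Scott-continuity of $O_x^\rho$ and $O_y^\rho$. The ``only if'' direction is immediate: fixing one coordinate and varying the other along a directed set $D$ of $X$ yields a directed set in $X\times X$ (with componentwise order), and applying the preservation-of-directed-suprema formulation of Scott-continuity from Proposition \ref{lat_cont} to $O^\rho$ gives the same property for each partial map. For the ``if'' direction, take a directed set $\Delta \subseteq X\times X$ with supremum $(a,b) = (\sup \pi_1(\Delta), \sup \pi_2(\Delta))$; using that $\pi_1(\Delta)$ and $\pi_2(\Delta)$ are themselves directed in $X$, separate Scott-continuity and monotonicity let us show $O^\rho(a,b) = \sup\{O^\rho(x,y) \mid (x,y)\in\Delta\}$ by a cofinality argument.

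For item (ii), the key observation is that each unary map $O_x^\rho : X \to X$ (and analogously $O_y^\rho$) is monotonic on the complete lattice of dense order $X$, so a unary version of Theorem \ref{resigalois} applies: such a monotonic map is residuated (as a function between posets, in the sense of Theorem \ref{Caracteriza Residuadas}) if and only if it is Scott-continuous. The argument would follow the implication \textbf{(i)}$\Rightarrow$\textbf{(iii)} of Theorem \ref{resigalois} essentially verbatim, replacing the bivariate $O$ by the unary $O_x^\rho$: use density of $X$ together with Proposition \ref{compa} to build a non-decreasing net witnessing that $\sup\{t : O_x^\rho(t)\leq y\}$ is attained, which by Theorem \ref{Caracteriza Residuadas}\ref{R3} is equivalent to $O_x^\rho$ being residuated. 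Combining this unary equivalence with item (i) yields item (ii).

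The main obstacle I expect is the ``if'' direction in item (i): showing that separate Scott-continuity implies joint Scott-continuity for bivariate functions on a product DCPO. The subtlety is that a directed set $\Delta \subseteq X\times X$ need not be the product of directed sets in each coordinate, so one must carefully use the fact that the projections $\pi_1(\Delta)$ and $\pi_2(\Delta)$ are directed and that $\Delta$ is cofinal in $\pi_1(\Delta)\times \pi_2(\Delta)$ to push the supremum through. Once this domain-theoretic step is pinned down, the remaining arguments are direct applications of Theorem \ref{resigalois}, Theorem \ref{Caracteriza Residuadas}, and Proposition \ref{conjugated}.
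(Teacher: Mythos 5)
Your proposal is sound and rests, at bottom, on the same two pillars as the paper's own argument---Proposition \ref{conjugated} (so that $O^{\rho}$ is again a quasi-overlap on $X$) and Theorem \ref{resigalois} (residuated $\Leftrightarrow$ Scott-continuous)---but it is organized differently and is considerably more complete than the printed proof. The paper only argues the mutual implication of the two biconditionals: for (i)$\Rightarrow$(ii) it introduces the componentwise order on $X^{2}$, notes that sections of a jointly Scott-continuous map are Scott-continuous, and cites Theorem \ref{resigalois}, while (ii)$\Rightarrow$(i) is declared ``immediate''; the separate-to-joint direction is never addressed. You instead prove each of (i) and (ii) outright (which of course gives their equivalence), and you supply precisely the step the paper omits: the cofinality argument showing that for directed $\Delta\subseteq X^{2}$ one has $\sup\Delta=(\sup\pi_{1}\Delta,\sup\pi_{2}\Delta)$ and that separate Scott-continuity plus monotonicity (automatic from \ref{ol4}) lets the supremum pass through, so joint and separate preservation of directed suprema coincide; working directly with directed suprema also sidesteps the paper's loose identification of the Scott topology on $X^{2}$ with a product topology. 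Two remarks on your treatment of (ii). First, the general slogan ``a monotone map on a dense complete lattice is residuated iff Scott-continuous'' is not true as stated: residuation also forces $\{t\in X\,|\,f(t)\leq y\}\neq\emptyset$ for every $y$, i.e. $f(0)=0$; for the sections $O^{\rho}_{x}$ this holds by \ref{ol2}, so your verbatim transcription of the (i)$\Rightarrow$(iii) argument of Theorem \ref{resigalois} does go through, but the caveat should be made explicit. Second, you could shortcut the unary redo entirely: once $O^{\rho}$ is residuated, the residuation principle for $(O^{\rho},I_{O^{\rho}})$ together with Theorem \ref{Caracteriza  Residuadas}\ref{R2} says literally that each section is residuated with residue $I_{O^{\rho}}(x,\cdot)$ (commutativity \ref{ol1} covering both families of sections), and the converse needs only the standard fact that residuated maps preserve existing suprema, hence directed ones. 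Net effect: your route proves a slightly stronger statement and fills the gaps the paper leaves open, at the cost of re-deriving a piece of Theorem \ref{resigalois} that could simply have been quoted.
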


\begin{proof}
	(\ref{(i)} $\Rightarrow$ \ref{(ii)})  On the space $X$ define the following partial order relation: $$(a,b)\leq(u,v) \Leftrightarrow a\leq u \textrm{ and } b\leq v.$$
	
It is routine to check that this provides two natural topologies $X^2$, namely: Scott's topology in space $\langle X, \leq \rangle$ and Scott's topology product in $\langle X^2, \leq \rangle$. So if a function defined on $X^2$ is Scott-continuous, its projections on the $X$ factor are clearly Scott-continuous. Therefore, since $O$ is commutative, non-decreasing at each variable and residuated (particularly Scott-continuous), so for each $x \in X$ fixed, set $O_x:X \to X$ by $O_x(z)=O(z,y)$ for all $y,z \in X$, and for each $y \in X$ fixed, set $O_y:X \to X$ by $O_y(z)=O(x,z)$ for all $x,z \in X$. Hence, by the Theorem \ref{resigalois}, follow the result.\\
	\noindent
	(\ref{(ii)} $\Rightarrow$ \ref{(i)}) It is an immediate consequence of \ref{(i)}.
\end{proof}

A pictorial representation of the Proposition \ref{ou} can be seen in Figure \ref{diagram2}.

\begin{figure}[!htb]
	$$\begin{tikzcd}[row sep=huge, column sep=huge]
	X \times X \arrow{r}[description]{Proj. 2} \arrow{d} [description]{Proj. 1} \arrow[dotted]{dr}[description]{O^{\rho}}
	&  X \arrow[shift right=1]{d}[swap]{O_x ^{\rho}} \arrow[<-,shift left=1]{d}{I_{O_x ^{\rho}}}\\
	X \arrow[<-,shift right=1]{r}[swap]{I_{O_y ^{\rho}}} \arrow[shift left=1]{r}{O_y ^{\rho}}
	& X \\ && X\times X \arrow[dotted]{lu}[description]{I_{O^{\rho}}}
	\arrow[bend left]{llu}
	\arrow[bend right]{uul} 
	\end{tikzcd}$$
	\caption{Diagram of residuated functions}\label{diagram2}
\end{figure}
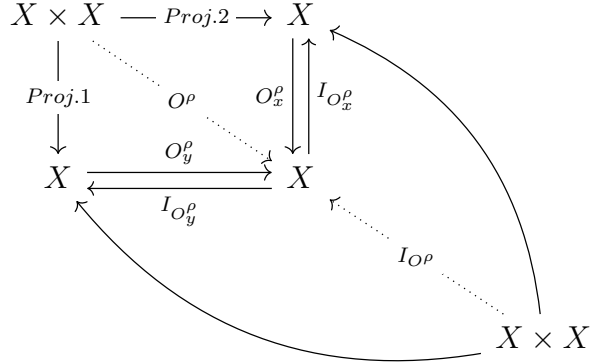

Therefore, given a residuated application $O^{\rho}:X^2 \rightarrow X$  and $z \in X$ the function $\varphi_z :X \rightarrow X$ defined by $\varphi_z (x)= I_{O_x ^{\rho}}(z)$, where $I_{O_x ^{\rho}}$ is the residue of $O_x ^{\rho}:X \rightarrow X$. Similarly define $\psi_z:X \rightarrow X$ by $\psi_z(y)= I_{O_y ^{\rho}}(z)$, where $I_{O_y ^{\rho}}$ is the residue of $O_y ^{\rho}:X \rightarrow X$. 

\begin{corollary}\label{psi}
	Let $X$ a complete lattice of dense order. For all $z\in X$ the following items are worth: 	\begin{enumerate}[labelindent=\parindent, leftmargin=*,label=\textbf{(\roman*)}]
		\item The pair $(\psi_z, \varphi_z)$ forms an adjunction;
		\item The applications $\psi_z$ and $\varphi_z$ are closed operators on $X$.
	\end{enumerate}
\end{corollary}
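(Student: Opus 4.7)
The plan is to unpack $\varphi_z$ and $\psi_z$ via the residuation principle (Theorem \ref{resigalois}) applied to $O^{\rho}$, which is a residuated quasi-overlap by hypothesis of Proposition \ref{ou}. By Theorem \ref{resigalois}\ref{(iii)}, for each fixed $x$ the residue of the partial map $O_x^{\rho}$ evaluated at $z$ is $\max\{t\in X : O^{\rho}(x,t)\leq z\} = I_{O^{\rho}}(x,z)$; hence $\varphi_z(x) = I_{O^{\rho}}(x,z)$, and symmetrically $\psi_z(y) = I_{O^{\rho}}(y,z)$. By Lemma \ref{l5.1}, both $\varphi_z$ and $\psi_z$ are antitone in their argument.

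For part (i), the adjunction is extracted by two applications of the residuation principle combined with the commutativity of $O^{\rho}$ (inherited from $O$ in Proposition \ref{conjugated}):
\begin{equation*}
y \leq \varphi_z(x) \ \Longleftrightarrow \ O^{\rho}(x,y) \leq z \ \Longleftrightarrow \ O^{\rho}(y,x) \leq z \ \Longleftrightarrow \ x \leq \psi_z(y).
\end{equation*}
This is precisely the adjoint equivalence characterizing the pair $(\psi_z,\varphi_z)$.

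For part (ii), each of the three closure properties is then read off from the adjunction. Monotonicity follows because $\varphi_z$ and $\psi_z$ each reverse order, so their mutual compositions $\psi_z\circ\varphi_z$ and $\varphi_z\circ\psi_z$ are order-preserving. Inflationarity is immediate by specializing the adjunction: from $\varphi_z(x)\leq \varphi_z(x)$ we obtain $x\leq \psi_z(\varphi_z(x))$, and symmetrically $y\leq \varphi_z(\psi_z(y))$. Idempotence comes from the standard zig-zag identity of any Galois connection, which forces $\psi_z\circ\varphi_z\circ\psi_z = \psi_z$ and $\varphi_z\circ\psi_z\circ\varphi_z = \varphi_z$, hence $(\psi_z\circ\varphi_z)^2 = \psi_z\circ\varphi_z$ and likewise for the reverse composition.

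The main conceptual subtlety is that $\varphi_z$ and $\psi_z$ are individually antitone, so the three closure axioms of the definition naturally attach to their compositions $\psi_z\circ\varphi_z$ and $\varphi_z\circ\psi_z$ rather than to the maps in isolation; once the statement is read in that (standard Galois-connection) sense, the verification above is routine. A minor secondary point is ensuring that, despite the fact that $\varphi_z$ and $\psi_z$ coincide as set-maps under the commutativity of $O^{\rho}$, the adjunction machinery still applies to them as an ordered pair.
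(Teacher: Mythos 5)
Your argument is correct, and in fact the paper states Corollary \ref{psi} without any proof, so there is nothing to compare it against except the surrounding machinery; your route (identify $\varphi_z(x)=I_{O^{\rho}}(x,z)$ and $\psi_z(y)=I_{O^{\rho}}(y,z)$ via Theorem \ref{resigalois}, use the residuation principle twice together with the commutativity of $O^{\rho}$ guaranteed by Proposition \ref{conjugated} to get $y\leq\varphi_z(x)\Leftrightarrow O^{\rho}(x,y)\leq z\Leftrightarrow x\leq\psi_z(y)$, then read off inflation, monotonicity of the composites and the zig-zag identities) is exactly the intended derivation from Proposition \ref{ou} and the standing hypothesis that $O^{\rho}$ is residuated.

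The one substantive point you raise deserves emphasis rather than apology: since $I_{O^{\rho}}$ is decreasing in its first argument (Lemma \ref{l5.1}), $\varphi_z$ and $\psi_z$ are antitone, so the corollary as literally written is imprecise on both counts. The pair $(\psi_z,\varphi_z)$ cannot form a monotonic Galois connection in the sense of the paper's own definition; what holds is the antitone (contravariant) Galois connection you prove. Likewise $\psi_z$ and $\varphi_z$ themselves are in general neither non-decreasing nor inflationary --- for instance with $X=[0,1]$, $O(x,y)=xy$ and $\rho=\mathrm{id}$ one has $\varphi_0(x)=I_O(x,0)=0$ for every $x>0$, so $x\leq\varphi_0(x)$ fails --- and it is the composites $\psi_z\circ\varphi_z$ and $\varphi_z\circ\psi_z$ that are closure operators, exactly as you verify. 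So your proof is sound; it proves the (standard, surely intended) reading of the statement, and your remark that the literal reading fails is a correction to the statement, not a gap in your argument. The only cosmetic caveat is that, $\varphi_z$ and $\psi_z$ being equal as maps by commutativity, item (ii) reduces to a single assertion about $(\psi_z\circ\varphi_z)$, which you already note.
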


\section{Final remarks}
In this paper  we propose was the residuation principle for the case of quasi-overlap functions on lattices and their respective induced implications. It has been found that the class of quasi-overlap functions that fulfill the residuation principle is the same class of continuous functions according to Scott's topology. Get a generalization of the residuation principle for quasi-orverlap functions was one of the motivations that led to the writing of this paper, which demonstrated the need for a topology on lattice. This topology is Scott's topology. In fact, given any lattice, it is always possible to know how each element behaves relative to the other elements, but it is difficult to know what the overall structure looks like. However, by defining Scott's topology, the topological properties related to the order that this lattice contains allowed to develop its own visualization for this lattice. Thus, a large number of properties that occur in the closed real interval $ [0,1] $ (eg density, connectivity, as well as being a Hausdorff space) could be generalized to general lattices with specific topological properties. Thus, concepts such as density were expressed both in topological terms and in terms of the defined order relation over the set. Another example was the concept of compactness, which allowed generalizations of the well-known \textit{extreme value theorem} \footnote{The extreme value theorem ensures that a continuous function defined in a compact set reaches its maximum and minimum somewhere in the set.}. The results for any lattice adjunctions using overlap functions allow these operators to be used in tools such as \textit{Mathematical Morphology}, which is applied to the field of signal and image processing through dilation operators, erosion, and others \cite{Haralick}.

In the field of logic, an important point is that residuation is an essential algebraic property that must be required to have good semantics for fuzzy logic systems based on the modus ponens rule, the necessary and sufficient condition for a conjunction fuzzy have a residue is not continuity but left-continuity. Since Scott-continuous quasi-overlap functions are actually a generalization of left-continuous overlap functions to lattices, it is definitely interesting, from a logical point of view, to focus on the study of properties related to left-continuous overlap functions, as well as investigating how these properties are interpreted for Scott-continuous quasi-overlap functions. It is noteworthy that knowledge of left-continuous overlap functions is drastically limited compared to the good description in the continuous case literature.


\end{document}